\documentclass[a4paper]{amsart}
\usepackage{amssymb,mathrsfs,mathtools}
\usepackage{setspace,enumerate}
\usepackage{bm}
\usepackage[utf8]{inputenc}
\usepackage[T1]{fontenc}

\newcommand{\N}{\mathbb{N}}
\newcommand{\C}{\mathbb{C}}
\newcommand{\R}{\mathbb{R}}

\newcommand{\lsz}{\left\lbrace}
\newcommand{\psz}{\right\rbrace}

\DeclareMathOperator{\sgn}{\mathrm{sgn}}
\newcommand{\dd}{\,\mathrm{d}}
\newcommand{\id}{Id}

\newtheorem{theorem}{Theorem}[section]
\newtheorem{lemma}[theorem]{Lemma}

\newtheorem{proposition}[theorem]{Proposition}
\newtheorem{remark}[theorem]{Remark}

\title{Atoms confined by very thin layers}
\author{\ Mat\v{e}j Tu\v{s}ek}
\date{28 June 2014}

\address{
Department of Mathematics,
Faculty of Nuclear Sciences and Physical Engineering,
Czech Technical University in Prague,
Trojanova 13, 120\,00 Prague 2, Czech Republic
}
\email{tusekmat@fjfi.cvut.cz}

\begin{document}

\begin{abstract}
The Hamiltonian of an atom with $N$ electrons and  a fixed nucleus of infinite mass between two parallel planes is considered in the limit when the distance $a$ between the planes tends to zero. We show that this Hamiltonian converges in the norm resolvent sense to a Schr\"{o}dinger operator acting effectively in $L^{2}(\R^{2N})$ whose potential part depends on $a$. Moreover, we prove that after an appropriate regularization this Schr\"{o}dinger operator tends, again in the norm resolvent sense, to the Hamiltonian of a two-dimensional atom (with the three-dimensional Coulomb potential-one over distance), as $a\to 0$. This makes possible to locate the discrete spectrum of the full Hamiltonian once we know the spectrum of the latter one. Our results also provide a mathematical justification for the interest in the two-dimensional atoms with the three-dimensional Coulomb potential.
\end{abstract}

\maketitle

\section{Introduction}
In this paper we discuss a non-relativistic quantum system of an atom confined in the middle of an infinite planar layer with impenetrable walls. We describe it by the three-dimensional atomic Hamiltonian with  the Dirichlet boundary condition imposed on the boundary planes. The study of confined atomic systems has a long history (\cite{MiBoBi_37}, from 1937, and \cite{GrSe_46}, from 1946, deal with the hydrogen atom with a nucleus placed at the center of an impenetrable spherical box of finite radius), as these systems may serve as important models for caged and compressed atoms \cite{FrYnFr_87,Ja_96,LaBuCo_02,Ki_09} or hydrogenic impurities in quantum dots \cite{Va_99,LiAmDo_08}. In the above mentioned references, only the confinement to finite regions, usually to balls, is considered. However, with prospects of mesoscopic physics applications and also for richer mathematical properties of corresponding Hamiltonians (presence of the continuous spectrum), a hydrogen atom confined in regions that are 
unbounded in some directions has quite recently drawn interest.

In particular, \cite{DuHo_10} deals with a hydrogen atom confined by a straight infinite tube, whereas in \cite{DuStTu_10} a hydrogen atom in a thin infinite planar layer of width $a$ was studied. The present paper may be viewed as an extension of the results obtained in the latter source to a multi-electron case. Therein the so-called effective Hamiltonian was introduced as a projection of the full Hamiltonian to the lowest transverse mode of the Dirichlet Laplacian on the layer. Due to the large separation distance between subsequent eigenvalues of the Dirichlet Laplacian in the transverse direction (it is proportional to $a^{-2}$), it was demonstrated that the effective Hamiltonian well approximates  the full atomic Hamiltonian in the norm resolvent sense, as $a\to 0$. After an appropriate regularization, the effective Hamiltonian may be in turn approximated by the Hamiltonian of a two-dimensional hydrogen atom (with the three-dimensional Coulomb potential, i.e., one over distance), as $a\to 0$. Since the spectrum of 
the latter Hamiltonian is explicitly known, one can use it to approximate the spectrum of the full Hamiltonian. 

Let us stress that there are several new aspects that complicates a similar analysis in the multi-electron case. First of all, the repulsive electron to electron interaction is involved. With some effort, we will be able to control it in a similar manner as the electron to nucleus interaction in the single electron case. Next, we must take the fermionic nature of electrons into the account. Actually, we will treat electrons  as distinguishable particles and only at the very end of the paper we perform reduction to the subspace of totally antisymmetric functions. Finally, let us recall that the spectrum of a two-dimensional (as well as a three-dimensional) atom is not known explicitly except of the single electron case.  Nevertheless, there are still some qualitative spectral results for two-dimensional atoms. See \cite{NaPoSo_11} for a concise presentation of them. At this point, let us make clear that in the present paper we do not concern with the question of the maximal negative ionization of 
the confined or two-dimensional atom (the existence of bound state). For our purposes, the stability of the first type (the lower boundness), which is quite easy to prove, will be sufficient. 

The paper is organized as follows. In Sections \ref{sec:2D}, \ref{sec:layer}, and \ref{sec:eff}, the Hamiltonians of a two-dimensional atom and an atom in a planar layer, and the effective Hamiltonian, respectively, are introduced in detail as self-adjoint operators. In the next sections, relations between these Hamiltonians are given. The main theorem comes in Section \ref{sec:main}. It essentially claims that the full Hamiltonian tends, in the norm resolvent sense, to the Hamiltonian of a two-dimensional atom, as $a\to 0$. (See Theorem \ref{theo:main} for a precise formulation.) Therefore the two-dimensional atom, which is a kind of mathematical construction, may be viewed as a limit of a physical system of an atom compressed among a pair of parallel planes. In this context, let us remark that the two-dimensional hydrogen atom (with the three-dimensional Coulomb potential) is of continuous interest in the literature \cite{PaPo_02,RoRo_03,ChaLe_05,Gr_08}.
Section \ref{sec:spectrum} is devoted to the localization of the discrete spectrum of the full Hamiltonian. Also analyticity of its eigenvalues in $a$ is briefly studied. Finally, in Section \ref{sec:fermionized}, we discuss the fermionized versions of the Hamiltonians and conclude that the approximation results remain valid.

\section{Hamiltonian of a two-dimensional atom}\label{sec:2D}
Consider $N$ mutually interacting electrons with the unit charge and mass in the field of a nucleus with an atomic charge $Z>0$ and infinite mass. Denote by $\bm{\varrho}_{i}\equiv(x_{i},y_{i})$ the coordinate of the $i$th electron in the center of mass coordinate system and introduce the following notation
$$\varrho_{i}:=|\bm{\varrho}_{i}|,\quad \varrho_{i,j}:=|\bm{\varrho}_{i}-\bm{\varrho}_{j}|.$$   
Then the Hamiltonian of this system, $h_{Z,N}$, is given by
\begin{align*}
 &h_{N,Z}:=-\Delta_{\R^{2N}}-\sum_{i=1}^{N}\frac{Z}{\varrho_{i}}+\sum_{1\leq i<j\leq N}\frac{1}{\varrho_{i,j}}\\
 &Q(h_{N,Z}):=Q(-\Delta_{\R^{2N}})=\mathcal{H}^{1}(\R^{2})^{\otimes^{N}}\equiv \mathcal{H}^{1}(\R^{2N}).
\end{align*}
Here $\Delta_{\R^{2N}}=\sum_{i=1}^{N}\Delta_{\bm{\varrho}_i}$, where $\Delta_{\bm{\varrho}_i}$ stands for the Laplacian in the $i$th coordinate (naturally extended on the appropriate tensor product), and $Q$ denotes the form domain of an operator. The operator $-\Delta_{\R^{2N}}$ with the chosen form domain is self-adjoint \cite{rs2} and so is $h_{N,Z}$ by the KLMN theorem, as we will prove below.

\begin{lemma}\label{lem:int_term_bound}
 For any $\psi\in \mathcal{H}^{1}(\R^{4})$, it holds
$$\langle\psi,\,\varrho_{i,j}^{-1}\psi\rangle\leq \frac{\Gamma(1/4)^{4}}{4\pi^{2}\sqrt{2}}\langle\psi,\,\sqrt{-\Delta_{\bm{\varrho}_i}-\Delta_{\bm{\varrho}_j}}\psi\rangle.$$
\end{lemma}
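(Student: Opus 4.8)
The plan is to strip off the centre-of-mass motion, eliminate one of the two two-dimensional variables by a partial Fourier transform, and finally reduce everything to the sharp two-dimensional Kato–Herbst inequality. First I would introduce the orthogonal change of variables $\mathbf u=(\bm\varrho_i-\bm\varrho_j)/\sqrt2$, $\mathbf v=(\bm\varrho_i+\bm\varrho_j)/\sqrt2$ on $\R^4=\R^2\times\R^2$. Being a rotation, it is unitary on $L^2(\R^4)$, preserves $\mathcal H^1$, leaves the Laplacian invariant, $-\Delta_{\bm\varrho_i}-\Delta_{\bm\varrho_j}=-\Delta_{\mathbf u}-\Delta_{\mathbf v}$, and turns the singular factor into $\varrho_{i,j}=\sqrt2\,|\mathbf u|$. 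Writing $\psi$ again for the transformed function, the two factors of $\sqrt2$ cancel and the claim reduces to $\langle\psi,|\mathbf u|^{-1}\psi\rangle\le\frac{\Gamma(1/4)^4}{4\pi^2}\langle\psi,\sqrt{-\Delta_{\mathbf u}-\Delta_{\mathbf v}}\psi\rangle$.

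Next I would apply the Fourier transform in the $\mathbf v$ variable alone. Denoting the partially transformed function by $\tilde\psi(\mathbf u,\mathbf k)$ (with $\mathbf k$ dual to $\mathbf v$), the potential term is untouched, $\langle\psi,|\mathbf u|^{-1}\psi\rangle=\int_{\R^2}d\mathbf k\int_{\R^2}d\mathbf u\,|\mathbf u|^{-1}|\tilde\psi(\mathbf u,\mathbf k)|^2$, whereas by the functional calculus the kinetic term becomes $\langle\psi,\sqrt{-\Delta}\psi\rangle=\int_{\R^2}d\mathbf k\,\langle\tilde\psi(\cdot,\mathbf k),\sqrt{-\Delta_{\mathbf u}+|\mathbf k|^2}\,\tilde\psi(\cdot,\mathbf k)\rangle$. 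Since the square root is operator-monotone, $\sqrt{-\Delta_{\mathbf u}+|\mathbf k|^2}\ge\sqrt{-\Delta_{\mathbf u}}$, so it suffices to prove the fibrewise bound at $\mathbf k=0$: for every $\phi\in\mathcal H^{1/2}(\R^2)$, $\int_{\R^2}|\mathbf u|^{-1}|\phi|^2\,d\mathbf u\le\frac{\Gamma(1/4)^4}{4\pi^2}\langle\phi,\sqrt{-\Delta}\phi\rangle$. Integrating this estimate over $\mathbf k$ then delivers the four-dimensional inequality.

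The heart of the matter is this two-dimensional inequality, which is the case $n=2$, $s=1/2$ of the Kato–Herbst inequality $\||\mathbf u|^{-s}\phi\|\le C_{n,s}\|(-\Delta)^{s/2}\phi\|$ with sharp constant $C_{n,s}=2^{-s}\Gamma(\tfrac{n-2s}{4})/\Gamma(\tfrac{n+2s}{4})$. Here $C_{2,1/2}=2^{-1/2}\Gamma(1/4)/\Gamma(3/4)$, and squaring together with the reflection formula $\Gamma(1/4)\Gamma(3/4)=\pi\sqrt2$ produces exactly $C_{2,1/2}^2=\Gamma(1/4)^4/(4\pi^2)$, which is the constant in the statement. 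I expect pinning down this sharp value to be the main obstacle. I would obtain it by the standard homogeneity argument: the operator $|\nabla|^{-1/4}|\mathbf u|^{-1/2}|\nabla|^{-1/4}$ is homogeneous of degree zero, hence diagonalised by the joint angular decomposition (Fourier series in the polar angle) and the Mellin transform in $|\mathbf u|$; on each angular channel it acts as multiplication by an explicit ratio of Gamma functions whose supremum is attained in the lowest channel and at the symmetric Mellin point, yielding $C_{2,1/2}^2$. Alternatively one simply invokes Herbst's theorem. All estimates are first performed for Schwartz functions and extended by density, which is legitimate because $\mathcal H^1(\R^4)\subset\mathcal H^{1/2}(\R^4)$ keeps every term finite.
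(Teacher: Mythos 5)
Your proposal is correct and follows essentially the same route as the paper: the orthogonal change to centre-of-mass and relative coordinates (which converts $\varrho_{i,j}$ into $\sqrt2\,|\mathbf u|$ and cancels the $\sqrt2$ in the constant), the sharp two-dimensional Kato--Herbst inequality applied in the relative variable, and the domination $\sqrt{-\Delta_{\mathbf u}}\leq\sqrt{-\Delta_{\mathbf u}-\Delta_{\mathbf v}}$ by operator monotonicity of the square root. The only cosmetic differences are that you phrase the last step fibrewise via a partial Fourier transform in $\mathbf v$ and you sketch a Mellin-transform derivation of the sharp constant, whereas the paper works directly at the operator level and simply cites Herbst and Bouzouina for the inequality.
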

\begin{proof}
 With the aid of a unitary mapping $U:L^{2}(\R^{4})\to L^{2}(\R^{4})$,
$$(U\psi)(\bm{s},\bm{t})=\psi\left(\frac{\bm{s}-\bm{t}}{\sqrt{2}},\frac{\bm{s}+\bm{t}}{\sqrt{2}}\right),$$
 the Fubini theorem, and the two-dimensional Kato inequality (see \cite{He_77}, \cite{Bo_02}),
 $$\frac{1}{\varrho}\leq\frac{\Gamma(1/4)^{4}}{4\pi^{2}}\sqrt{-\Delta_{\bm{\varrho}}},\quad\text{on }\mathcal{H}^{1}(\R^{2}),$$
 we have
\begin{equation}\label{eq:int_term_bound}
 \begin{split}
 \langle\psi,\,\varrho_{i,j}^{-1}\psi\rangle&=\langle U\psi,\, U\varrho_{i,j}^{-1}U^{\dagger}U\psi\rangle=\langle U\psi,\, (\sqrt{2}|\bm{t}|)^{-1}U\psi\rangle\\
 &\leq\frac{\Gamma(1/4)^{4}}{4\pi^{2}\sqrt{2}}\langle U\psi,\, \sqrt{-\Delta_{\bm{t}}}\,U\psi\rangle.
 \end{split}
\end{equation}
Passing to the Fourier image in variables $\bm{t}$ and $\bm{s}$, we directly infer that
\begin{equation}\label{eq:laplace_est}
\sqrt{-\Delta_{\bm{t}}}\leq \sqrt{-\Delta_{\bm{t}}-\Delta_{\bm{s}}}=U\sqrt{-\Delta_{\bm{\varrho}_i}-\Delta_{\bm{\varrho}_j}}\,U^{\dagger}.
\end{equation}
Putting this into \eqref{eq:int_term_bound}, we obtain the assertion of the lemma.
\end{proof}

\begin{proposition}\label{prop:klmn_2d}
 Denote 
$$\mathcal{V}_{\mathrm{2D}}:=-\sum_{i=1}^{N}\frac{Z}{\varrho_{i}}+\sum_{1\leq i<j\leq N}\frac{1}{\varrho_{i,j}}.$$
Then for any $\epsilon>0$ and $\psi\in H^{1}(\R^{2N})$,
\begin{equation*}
 |\langle\psi,\mathcal{V}_{\mathrm{2D}}\psi\rangle|\leq\frac{\Gamma(1/4)^{4}}{8\pi^{2}}\sqrt{N}\mathrm{max}\lsz\frac{N-1}{\sqrt{2}},Z\psz \left(\epsilon\|\nabla\psi\|^{2}+\epsilon^{-1}\|\psi\|^{2}\right),
\end{equation*}
where $\nabla$ is the $2N$-dimensional gradient,
and
\begin{equation}\label{eq:num_val_bound}
h_{N,Z}\geq-N\left(\frac{\Gamma(1/4)^{4}}{8\pi^{2}}Z\right)^{2}. 
\end{equation}
\end{proposition}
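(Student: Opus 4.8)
The plan is to prove the form inequality first and then read off \eqref{eq:num_val_bound} by optimizing over $\epsilon$. Throughout set $\alpha:=\Gamma(1/4)^4/(4\pi^2)$, $T_i:=-\Delta_{\bm{\varrho}_i}$ and $T:=-\Delta=\sum_{i=1}^N T_i$. The first move is to reduce every Coulomb singularity to a \emph{fractional} kinetic term: the two-dimensional Kato inequality quoted in the proof of Lemma~\ref{lem:int_term_bound} gives $\langle\psi,\varrho_i^{-1}\psi\rangle\le\alpha\,\langle\psi,\sqrt{T_i}\,\psi\rangle$ for each electron--nucleus term, while Lemma~\ref{lem:int_term_bound} gives $\langle\psi,\varrho_{i,j}^{-1}\psi\rangle\le(\alpha/\sqrt2)\,\langle\psi,\sqrt{T_i+T_j}\,\psi\rangle$ for each repulsive term. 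Both $\varrho_i^{-1}$ and $\varrho_{i,j}^{-1}$ are nonnegative, which is what lets me treat the attractive and repulsive parts of $\mathcal{V}_{2D}$ one-sidedly.

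The maximum in the statement is exactly the larger of two \emph{separate} one-sided bounds, so I would not add the two couplings. For the lower side I drop the nonnegative repulsive sum, obtaining $\langle\psi,\mathcal{V}_{2D}\psi\rangle\ge -Z\alpha\sum_i\langle\psi,\sqrt{T_i}\,\psi\rangle$. For the upper side I drop the attractive sum and use subadditivity of the square root, so that each $\sqrt{T_i}$ is counted $N-1$ times:
\[
\langle\psi,\mathcal{V}_{2D}\psi\rangle\le\frac{\alpha}{\sqrt2}\sum_{i<j}\langle\psi,(\sqrt{T_i}+\sqrt{T_j})\,\psi\rangle=\frac{(N-1)\alpha}{\sqrt2}\sum_{i}\langle\psi,\sqrt{T_i}\,\psi\rangle .
\]
Combining the two gives $|\langle\psi,\mathcal{V}_{2D}\psi\rangle|\le\alpha\,\max\lsz(N-1)/\sqrt2,\,Z\psz\,\sum_i\langle\psi,\sqrt{T_i}\,\psi\rangle$. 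Since the $T_i$ act on different variables they commute, so I may diagonalize them jointly; there both the subadditivity $\sqrt{T_i+T_j}\le\sqrt{T_i}+\sqrt{T_j}$ just used and the bound $\sum_i\sqrt{T_i}\le\sqrt{N}\,\sqrt{T}$ (from the scalar Cauchy--Schwarz inequality $\sum_i\sqrt{\lambda_i}\le\sqrt N(\sum_i\lambda_i)^{1/2}$) reduce to elementary scalar inequalities. Thus $\sum_i\langle\psi,\sqrt{T_i}\,\psi\rangle\le\sqrt N\,\langle\psi,\sqrt T\,\psi\rangle$, and the AM--GM estimate $\sqrt x\le\tfrac12(\epsilon x+\epsilon^{-1})$ applied to $T=-\Delta$ yields $\langle\psi,\sqrt T\,\psi\rangle\le\tfrac12(\epsilon\|\nabla\psi\|^2+\epsilon^{-1}\|\psi\|^2)$. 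Assembling these, with $\alpha/2=\Gamma(1/4)^4/(8\pi^2)$, produces the first displayed inequality exactly.

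For the second claim I need only the attractive one-sided estimate. Writing $h_{N,Z}=T+\mathcal{V}_{2D}$ as a form and taking $\|\psi\|=1$,
\[
\langle\psi,h_{N,Z}\psi\rangle\ge\|\nabla\psi\|^2-\tfrac12\,Z\alpha\sqrt N\,(\epsilon\|\nabla\psi\|^2+\epsilon^{-1}).
\]
Choosing $\epsilon=2/(Z\alpha\sqrt N)$ annihilates the coefficient of $\|\nabla\psi\|^2$ and leaves $-(Z\alpha\sqrt N/2)^2=-(\Gamma(1/4)^4\sqrt N\,Z/(8\pi^2))^2$, which is \eqref{eq:num_val_bound}. (The same family of estimates, with $\epsilon$ chosen small so that the relative form bound $\tfrac{\alpha}{2}\sqrt N\max\lsz(N-1)/\sqrt2,Z\psz\,\epsilon$ is $<1$, is precisely what makes the KLMN theorem applicable and establishes that $h_{N,Z}$ is self-adjoint and bounded below.)

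I expect the only real obstacle to be bookkeeping rather than analysis: securing the \emph{maximum} instead of a sum of the two coupling strengths, which forces the separate treatment of the positive and negative parts of $\mathcal{V}_{2D}$, and justifying the operator Cauchy--Schwarz step $\sum_i\sqrt{T_i}\le\sqrt N\,\sqrt T$ through the simultaneous diagonalizability of the commuting Laplacians $T_i$. Everything after that is the routine AM--GM-plus-optimization that generates the square appearing in \eqref{eq:num_val_bound}.
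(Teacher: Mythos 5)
Your argument is correct and follows essentially the same route as the paper: one-sided treatment of the attractive and repulsive parts, the two-dimensional Kato inequality together with Lemma~\ref{lem:int_term_bound}, the subadditivity and $\ell^{1}$--$\ell^{2}$ Cauchy--Schwarz steps (which the paper carries out explicitly in Fourier variables $\bm{\lambda}_{i}$ rather than via joint diagonalization of the commuting $-\Delta_{\bm{\varrho}_i}$), and the final optimization in $\epsilon$ yielding \eqref{eq:num_val_bound}. The only cosmetic difference is that the paper obtains the $\epsilon$-split via $\langle\hat\psi,|\bm{\lambda}|\hat\psi\rangle\leq\|\psi\|\,\|\nabla\psi\|$ followed by the arithmetic--geometric mean inequality, whereas you apply the scalar bound $\sqrt{x}\leq\tfrac12(\epsilon x+\epsilon^{-1})$ inside the functional calculus; these are equivalent.
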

\begin{proof}
 Let $\hat{\psi}$ (in a variable $\bm{\lambda}\equiv(\bm{\lambda}_{1},\ldots,\bm{\lambda}_{N})$) stands for the Fourier image of $\psi$. Then by the two-dimensional Kato inequality,
\begin{equation}\label{eq:2d_pot_neg}
\begin{split}
 -\langle\psi,\mathcal{V}_{\mathrm{2D}}\psi\rangle&\leq\langle\psi,Z \sum_{i}\varrho_{i}^{-1}\psi\rangle\leq\frac{\Gamma(1/4)^{4}}{4\pi^{2}}Z\langle\hat{\psi},\sum_{i}|\bm{\lambda}_{i}|\hat{\psi}\rangle\leq\frac{\Gamma(1/4)^{4}}{4\pi^{2}}Z\langle\hat{\psi},\sqrt{N}|\bm{\lambda}|\hat{\psi}\rangle\\
 &\leq \frac{\Gamma(1/4)^{4}}{4\pi^{2}}Z\sqrt{N}\|\psi\|\|\nabla\psi\|\leq\frac{\Gamma(1/4)^{4}}{8\pi^{2}}Z\sqrt{N}\left(\epsilon\|\nabla\psi\|^{2}+\epsilon^{-1}\|\psi\|^{2}\right).
\end{split}
\end{equation}
Using Lemma \ref{lem:int_term_bound} we obtain in a similar manner as above,
\begin{equation*}
\begin{split}
 &\langle\psi,\mathcal{V}_{\mathrm{2D}}\psi\rangle\leq\langle\psi,\sum_{i<j}\varrho_{i,j}^{-1}\psi\rangle\leq \frac{\Gamma(1/4)^{4}}{4\pi^{2}\sqrt{2}}\langle\hat{\psi},\sum_{i<j}\sqrt{\bm{\lambda}_{i}^{2}+\bm{\lambda}_{j}^{2}}~\hat{\psi}\rangle\\
 &\leq \frac{\Gamma(1/4)^{4}}{4\pi^{2}\sqrt{2}}(N-1)\langle\hat{\psi},\sum_{i}|\bm{\lambda}_{i}|\hat{\psi}\rangle
 \leq \frac{\Gamma(1/4)^{4}}{8\pi^{2}\sqrt{2}}(N-1)\sqrt{N}\left(\epsilon\|\nabla\psi\|^{2}+\epsilon^{-1}\|\psi\|^{2}\right).
\end{split}
\end{equation*}
Neglecting the positive component of $\mathcal{V}_{\mathrm{2D}}$, \eqref{eq:num_val_bound} follows from \eqref{eq:2d_pot_neg} with
$$\varepsilon=\left(\frac{\Gamma(1/4)^{4}}{8\pi^{2}}Z\sqrt{N}\right)^{-1}.$$
\end{proof}

\begin{remark}[Spectrum of $h_{N,Z}$]\label{rem:h_spec}
 The spectrum of $h_{N,Z}$ is, of course, explicitly known only if $N=1$. In that case \cite{YaGuCha_91,DuStTu_10},
 $$\sigma_{p}(h_{1,Z})=\left\{-\frac{Z^{2}}{(2N-1)^{2}},\, N\in\N\right\},\quad \sigma_{ess}(h_{1,Z})=\sigma_{ac}(h_{1,Z})=[0,\infty).$$
 For $N>1$, we have the HVZ theorem \cite{Si_77} which states that
 $$\sigma_{ess}(h_{N,Z})=[\inf\sigma(h_{N-1,Z}),\infty).$$
\end{remark}

\section{Hamiltonian of an atom in a layer}\label{sec:layer}
Let $\Omega_{a}=\R^{2}\times (-a/2,a/2)$ with $a>0$. Consider a three-dimensional atom with $N$ electrons and and with a nucleus of  infinite mass and of  a charge $Z>0$ restricted to $\Omega_{a}$ by imposing the Dirichlet boundary condition on the boundary planes. For simplicity, let us place the nucleus at the origin. Then the Hamiltonian, $H_{N,Z}^{a}$, of this system acts in $L^{2}(\Omega_{a})^{\otimes^{N}}$ as follows,
\begin{equation*}
H_{N,Z}^{a}:=-\Delta_{\Omega_{a}^{N}}-\sum_{i=1}^{N}\frac{Z}{r_{i}}+\sum_{1\leq i<j\leq N}\frac{1}{r_{i,j}},
\end{equation*}
where $\bm{r}_{i}\equiv(x_{i},y_{i},z_{i})\in\Omega_{a}$ is the coordinate of the $i$th electron,
$$r_{i}:=|\bm{r}_{i}|,\quad r_{i,j}:=|\bm{r}_{i}-\bm{r}_{j}|,$$
and $-\Delta_{\Omega_{a}^{N}}$ stands for the free Hamiltonian of an $N$-particle system in $\Omega_a$. In more detail,
\begin{align*}
&-\Delta_{\Omega_{a}^{N}}:=-\sum_{i=1}^{N}\id\otimes\ldots\otimes\Delta_{\bm{r}_{i}}\otimes\ldots\otimes\id\\
&\mathrm{Dom}(-\Delta_{\Omega_{a}^{N}}):=\big(\mathcal{H}_{0}^{1}(\Omega_{a})\cap\mathcal{H}^{2}(\Omega_{a})\big)^{\otimes^{N}},
\end{align*}
where $\Delta_{\bm{r}_{i}}$ is the Laplace operator on $L^{2}(\Omega_{a})$ (in the variable $\bm{r}_{i}$) with the Dirichlet boundary condition. This operator is self-adjoint \cite{davies}. Below we will show that $H_{N,Z}^{a}$ is $(-\Delta_{\Omega_{a}^{N}})$-bounded with a relative bound smaller than one. Thus $H_{N,Z}^{a}$ is also self-adjoint on
$\mathrm{Dom}(H_{N,Z}^{a})=\mathrm{Dom}(-\Delta_{\Omega_{a}^{N}})$ by the Kato-Rellich theorem.

Put
$$\mathcal{V}:=-\sum_{i=1}^{N}\frac{Z}{r_{i}}+\sum_{1\leq i<j\leq N}\frac{1}{r_{i,j}}.$$
Then
$$\|\mathcal{V}\psi\|^{2}\leq Z^{2}N\sum_{i=1}^{N}\|r_{i}^{-1}\psi\|^{2}+\binom{N}{2}\sum_{1\leq i<j\leq N}\|r_{i,j}^{-1}\psi\|^{2}.$$
Take $\psi\in \big(\mathcal{H}_{0}^{1}(\Omega_{a})\cap\mathcal{H}^{2}(\Omega_{a})\big)^{\otimes N}$.
Recall that the Hardy inequality (see, e.g., \cite{WaZh_03}) states 
\begin{equation}\label{eq:Hardy_ineq}
\frac{1}{4}\int_{\mathbb{R}^{3}}\frac{|u(\mathbf{x})|^{2}}{|\mathbf{x}|^{2}}\,\mathrm{d}\mathbf{x}\leq\int_{\mathbb{R}^{3}}|\nabla u(\mathbf{x})|^{2}\,\mathrm{d}\mathbf{x}
\end{equation}
for any  $u\in\mathcal{D}^{1,2}(\mathbb{R}^{3})$, and that $\mathcal{H}_{0}^{1}(\Omega_{a})$ may be naturally embedded into $\mathcal{D}^{1,2}(\R^{3})$.
Thus, using the Fubini theorem, we have
\begin{equation*}
 \|r_{i}^{-1}\psi\|^{2}\leq 4\|\nabla_{\bm{r}_{i}}\psi\|^{2}=4\langle\psi,-\Delta_{\bm{r}_{i}}\psi\rangle.
\end{equation*}
Similarly, we obtain
\begin{align*}
 \|r_{1,2}^{-1}&\psi\|^{2}=\int_{\Omega_{a}^{N}}\frac{1}{|\bm{r}_{1}-\bm{r}_{2}|^{2}}|\psi(\bm{r}_{1},\ldots\bm{r}_{N})|^{2}\dd\bm{r}_{1}\ldots\dd\bm{r}_{N}\\
 &=\int_{\Omega_{a}^{\times (N-1)}}\int_{\R^{2}\times(-a/2-z_2,a/2-z_2)}\frac{1}{|\bm{u}|^{2}}\,|\psi(\bm{u}+\bm{r}_2,\bm{r}_2,\ldots\bm{r}_{N})|^{2}\dd\bm{u}\dd\bm{r}_2\ldots\dd\bm{r}_{N}\\
 &\leq 4\int_{\Omega_{a}^{\times (N-1)}}\int_{\R^{2}\times(-a/2-z_2,a/2-z_2)}|\nabla_{\bm{u}}\psi(\bm{u}+\bm{r}_2,\bm{r}_2,\ldots\bm{r}_{N})|^{2}\dd\bm{u}\dd\bm{r}_{2}\ldots\dd\bm{r}_{N}\\
 &=4\int_{\Omega_{a}^{N}}|\nabla_{\bm{r}_{1}}\psi(\bm{r}_{1},\ldots\bm{r}_{N})|^{2}\dd\bm{r}_{1}\ldots\dd\bm{r}_{N}= 4\langle\psi,-\Delta_{\bm{r}_{1}}\psi\rangle.
\end{align*}
We conclude that
\begin{equation}\label{eq:full_pot_bound}
\begin{split}
 \|\mathcal{V}\psi\|^{2}&\leq 4\left(Z^{2}N+\binom{N}{2} (N-1)\right)\langle\psi,-\Delta_{\Omega_{a}^{N}}\psi\rangle\\
 &\leq \left(2Z^{2}N+N(N-1)^{2}\right)\Big(\epsilon\|-\Delta_{\Omega_{a}^{N}}\psi\|^{2}+\epsilon^{-1}\|\psi\|^{2}\Big)
\end{split}
\end{equation}
for any $\epsilon>0$.

\section{The effective Hamiltonian}\label{sec:eff}
Let us decompose the kinetic part of $H_{N,Z}^{a}$ with respect to the transverse modes,
$$-\Delta_{\Omega_{a}^{N}}=\sum_{i=1}^{N}\left[\bigoplus_{n=1}^{\infty}(-\Delta_{\bm{\varrho}_i}+E_{n}^{a})\otimes\langle .,\chi_{n}^{a}\rangle\chi_{n}^{a}\right],$$
where 
\begin{equation*}
  E_{n}^{a}:=(n\pi/a)^{2},\quad\chi^{a}_{n}(z):=\sqrt{\frac{2}{a}}\begin{cases}
                               \cos\frac{n\pi z}{a}& \text{ if n is odd}\\
				\sin\frac{n\pi z}{a}& \text{ if n is even.}
                              \end{cases}
\end{equation*}
The {\it effective Hamiltonian}, $H_{\mathrm{eff}}^{a}$, is defined with the aid of the projection on the lowest transverse mode,
$$P^{a}=\bigotimes_{i=1}^{N}\left[\id_{L^{2}(\R^{2})}\otimes\langle .,\chi_{1}^{a}(z_{i})\rangle\chi_{1}^{a}(z_{i})\right],$$
as 
$$H_{\mathrm{eff}}^{a}:=P^{a}H_{N,Z}^{a}P^{a}.$$
It is well defined  on $\mathrm{Dom}(H_{N,Z}^{a})$ because $\mathrm{Dom}(H_{N,Z}^{a})$ is invariant under $P^{a}$. Moreover,
$P^{a}\mathrm{Dom}(H_{N,Z}^{a})$ is naturally isometric to $\mathcal{H}^{2}(\R^{2})^{\otimes^{N}}$. Thus we will view $H_{\mathrm{eff}}^{a}$ as the following operator on $L^{2}(\R^{2N})$,
\begin{align*}
&H_{\mathrm{eff}}^{a}=-\Delta_{\R^{2N}}+NE_{1}^{a}-Z\sum_{i=1}^{N}V_{en}^{a}(\varrho_{i})+\sum_{1\leq i<j\leq N}V_{ee}^{a}(\varrho_{i,j})\\
&\mathrm{Dom}(H_{\mathrm{eff}}^{a})=\mathcal{H}^{2}(\R^{2})^{\otimes^{N}}.
\end{align*}
Here the {\it effective potentials} are defined by
\begin{align*}
 &V_{en}^{a}(\varrho):=\frac{4}{a}\int_{0}^{a/2}\frac{\cos^2\frac{\pi s}{a}}{\sqrt{\varrho^2+s^2}}\dd s,\\
 &V_{ee}^{a}(\varrho):=\frac{4}{a^{2}}\int_{-a/2}^{a/2}\int_{-a/2}^{a/2}\frac{\cos^{2}\frac{\pi s}{a}\cos^{2}\frac{\pi t}{a}}{\sqrt{\varrho^{2}+(s-t)^{2}}}\dd s\dd t.
\end{align*}
Below we will prove self-adjointness of $H^{a}_{\mathrm{eff}}$.

At first, we need to know some properties of the effective potentials.
The properties of $V_{en}^{a}$ are extensively discussed in \cite{DuStTu_10} (therein $V_{en}^{a}$ is called simply $V_{\mathrm{eff}}^{a}$), but we will also summarize some of them here for the reader's convenience. One directly verifies that
\begin{description}
 \item[Asymptotic behavior]
 \begin{align*}
 &V_{i}^{a}(\varrho)=\frac{1}{\varrho}+O\left(\frac{1}{\varrho^{3}}\right),\quad\text{as }\varrho\to\infty,\quad i\in\{ee,en\}\\
 &V_{ee}^{a}(\varrho)=-\frac{3}{a}\ln{\varrho}+O(1)\text{ and }V_{en}^{a}(\varrho)=-\frac{4}{a}\ln{\varrho}+O(1),\quad\text{as }\varrho\to 0+
 \end{align*}
 \item[Scaling properties]
 $$V_{i}^{a}(\varrho)=\frac{1}{a}V_{i}^{1}\left(\frac{\varrho}{a}\right),\quad i\in\{ee,en\}$$
 \item[Bounds]
 \begin{equation}\label{eq:pot_bound}
 0\leq V_{i}^{a}(\varrho)\leq\frac{1}{\varrho},\quad i\in\{ee,en\}.
 \end{equation}
\end{description}
Moreover $V_{ee}^{a}$ and $V_{en}^{a}$  are strictly decreasing.

Now we see that 
$$V_{i}^{a}\in L^{2}(\R^{2})+L^{\infty}(\R^{2}),\quad i\in\{ee,en\}.$$
Hence, in the similar manner as in the three-dimensional case (see \cite[Thm. X.16]{rs2}), it follows that $H_{\mathrm{eff}}^{a}$ is self-adjoint on $\mathcal{H}^{2}(\R^{2})^{\otimes^{N}}$,
due to the Kato-Rellich theorem.  

Furthermore, one can easily observe that
$$ W_{i}(\varrho):=1-\varrho V_{i}^{1}(\varrho),\quad i\in\{ee,en\},$$
are $L^{1}(\R_{+},\dd\varrho)$-functions and  $0\leq W_{i}\leq 1$. Consequently, we can apply \cite[Lemma 5]{DuStTu_10}. Here we reproduce it in a slightly modified form.

\begin{lemma}\label{lem:gen_dif_est}
Suppose $W\in L^{1}(\mathbb{R}_{+},\mathrm{d}\varrho)$
and $0\leq W\leq1$. Put \begin{equation}
V^{a}(\varrho)=\frac{1}{\varrho}\left(1-W\!\left(\frac{\varrho}{a}\right)\right)\!,\ a>0.\label{eq:Va_def}\end{equation}
Then for any $a$, $0<a<1/2$, one has \begin{equation}
\begin{split} \big\|(-\Delta_{\R^2}+2)^{-1/2}&\left(\varrho^{-1}-V^{a}\right)(-\Delta_{\R^2}+2)^{-1/2}\big\|\\
\noalign{\smallskip} & \leq\,2\sqrt{3}~a|\ln{a}|\int_{\mathbb{R}_{+}}W(\varrho)\,\mathrm{d}\varrho+4\sqrt{2}~a\left(\int_{\mathbb{R}_{+}}W(\varrho)\,\mathrm{d}\varrho\right)^{1/2},\end{split}
\label{eq:dif_est}\end{equation}
where $-\Delta_{\R^2}$ stands for the two-dimensional free particle Hamiltonian.
\end{lemma}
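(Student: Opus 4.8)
The plan is to establish the operator-norm bound by working in position space and estimating the kernel of the sandwiched difference. Since $\varrho^{-1}-V^{a}(\varrho)=\varrho^{-1}W(\varrho/a)$ by the definition \eqref{eq:Va_def}, the operator in question is multiplication by the nonnegative function $M_a(\varrho):=\varrho^{-1}W(\varrho/a)$, flanked by the smoothing resolvent $(-\Delta_{2D}+2)^{-1/2}$. First I would recall that $(-\Delta_{2D}+2)^{-1/2}$ has an explicit, positive, radially symmetric convolution kernel $G(\bm{\varrho})$ on $\R^2$ (a Bessel-type kernel, decaying exponentially and with a logarithmic singularity at the origin), so that the operator $A:=(-\Delta_{2D}+2)^{-1/2}M_a(-\Delta_{2D}+2)^{-1/2}$ has integral kernel $K(\bm{\varrho},\bm{\varrho}')=\int_{\R^2}G(\bm{\varrho}-\bm{\xi})\,M_a(|\bm{\xi}|)\,G(\bm{\xi}-\bm{\varrho}')\dd\bm{\xi}$. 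Because $M_a\geq 0$, the operator $A$ is positive, and I would bound its norm via the Schur test (or simply by $\|A\|\le \sup_{\bm{\varrho}}\int |K(\bm{\varrho},\bm{\varrho}')|\dd\bm{\varrho}'$ together with symmetry), which reduces everything to controlling $\sup_{\bm{\varrho}}\int_{\R^2} G(\bm{\varrho}-\bm{\xi})\,M_a(|\bm{\xi}|)\,(G*1)(\bm{\xi})\dd\bm{\xi}$, i.e. to integrals of $G$ against the localized potential $M_a$.

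The key mechanism is that $M_a$ is supported essentially where $\varrho\lesssim a$ (since $W\le 1$ is integrable and the factor $\varrho^{-1}$ only matters near the origin), and its $L^1(\R^2,\dd\bm{\varrho})$-type mass is small with $a$. Concretely, after passing to polar coordinates and substituting $\varrho=a\sigma$, one finds $\int_0^\infty M_a(\varrho)\,\varrho\,\dd\varrho=\int_0^\infty W(\varrho/a)\dd\varrho=a\int_{\R_+}W(\sigma)\dd\sigma$, which explains the linear-in-$a$ factor multiplying $\big(\int W\big)$ and $\big(\int W\big)^{1/2}$ on the right-hand side of \eqref{eq:dif_est}. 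The two terms on the right should arise from splitting the estimate according to the two competing singular behaviors of the Green's kernel $G$ near the origin: the logarithmic part of $G$ produces the $a|\ln a|\int W$ contribution (the $|\ln a|$ coming from integrating $\ln\varrho$ against the $a$-scaled support), while the more singular cross term, estimated by Cauchy--Schwarz, yields the $a(\int W)^{1/2}$ piece. I would make these splittings quantitative by using the explicit decay and singularity estimates for $G$ and tracking the exact constants $2\sqrt{3}$ and $4\sqrt{2}$; the restriction $0<a<1/2$ enters to guarantee $|\ln a|>0$ and to keep the localized region inside the regime where the logarithmic bound on $G$ dominates.

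I expect the main obstacle to be obtaining the precise constants while keeping the two singularity regimes cleanly separated. The logarithmic singularity of the two-dimensional Green's function makes the near-origin analysis delicate: one must carefully split the $\bm{\xi}$-integral into a region near the origin (where $G$'s logarithm is active and contributes the $|\ln a|$) and a far region (where $G$ decays and the Cauchy--Schwarz bound gives the cleaner $\sqrt{\int W}$ factor), then optimize the split so that the stated constants emerge rather than merely \emph{some} constant. Since the present lemma is quoted as a slight reformulation of \cite[Lemma 5]{DuStTu_10}, I anticipate the cleanest route is to reduce the claim to that reference by checking that the hypotheses match and that the scaling $V^a(\varrho)=\varrho^{-1}(1-W(\varrho/a))$ is exactly the form treated there; the only genuinely new labor would be verifying that the relative form-bound argument transfers verbatim and that no constant degrades under the reformulation.
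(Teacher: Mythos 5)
The paper does not actually prove this lemma: it is quoted verbatim (in ``slightly modified form'') from \cite[Lemma 5]{DuStTu_10}, so your closing suggestion --- reduce the claim to that reference after checking that the hypotheses and the scaling $V^{a}(\varrho)=\varrho^{-1}(1-W(\varrho/a))$ match --- is exactly what the author does, and is the correct reading of the situation. The constants $2\sqrt{3}$ and $4\sqrt{2}$ are inherited from that source, not re-derived here.

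Your sketched direct proof, however, has a concrete flaw in its central mechanism. The convolution kernel $G$ of $(-\Delta_{2D}+2)^{-1/2}$ in two dimensions is \emph{not} logarithmically singular at the origin; it behaves like $(2\pi|\bm{x}|)^{-1}$ there (it is the kernel of the full resolvent $(-\Delta_{2D}+2)^{-1}$, namely $\tfrac{1}{2\pi}K_{0}(\sqrt{2}\,|\bm{x}|)$, that has the logarithmic singularity). As a consequence the unweighted Schur test you propose fails: with $M_{a}(\varrho)=\varrho^{-1}W(\varrho/a)$ one gets
\begin{equation*}
\sup_{\bm{\varrho}}\int K(\bm{\varrho},\bm{\varrho}')\dd\bm{\varrho}'
=\|G\|_{L^{1}}\sup_{\bm{\varrho}}\int G(\bm{\varrho}-\bm{\xi})M_{a}(|\bm{\xi}|)\dd\bm{\xi}
\;\gtrsim\;\int_{0}^{\infty}\frac{W(\varrho/a)}{\varrho}\dd\varrho ,
\end{equation*}
which diverges whenever $W$ does not vanish at the origin --- and in the intended application $W_{en}(0+)=W_{ee}(0+)=1$, so this is precisely the relevant case. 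The standard repair, which is presumably what \cite{DuStTu_10} does, is to use positivity of $M_{a}$ to rewrite the norm as $\|\sqrt{M_{a}}\,(-\Delta_{2D}+2)^{-1}\sqrt{M_{a}}\|$, whose kernel involves only $K_{0}(\sqrt{2}\,|\bm{x}-\bm{y}|)$ with its integrable logarithmic singularity, and then estimate (e.g.\ via the Hilbert--Schmidt norm) after splitting into near- and far-diagonal regions; your intuition that the $a|\ln a|\int W$ term comes from the logarithm tested against the $a$-scaled mass, and the $a(\int W)^{1/2}$ term from a Cauchy--Schwarz step, is then sound, but it must be run on the full resolvent, not on the square root.
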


\begin{remark}[Spectrum of $H_{\mathrm{eff}}^{a}$]\label{rem:eff_spectrum}
 Due to \eqref{eq:pot_bound}, $\sigma(H_{\mathrm{eff}}^{a}-NE_{1}^{a})$ has a lower bound given in \eqref{eq:num_val_bound}.
 Moreover, the HVZ theorem yields 
$[NE_{1}^{a},\infty)\,\subset\sigma_{ess}(H_{\mathrm{eff}}^{a}).$
\end{remark}

\section{Approximation of the effective Hamiltonian by the two-dimensional atomic Hamiltonian}
Observe that single-particle potentials that are controlled by $-\Delta_{\R^2}$ are also controlled by $-\Delta_{\R^{2N}}$, which will be henceforth denoted just by $-\Delta$ to lighten the notation. The same holds true for electron-to-electron interaction terms (that are only functions of the mutual distance), since they may be viewed as single-particle potentials with the appropriate change of coordinates. In detail, for $\xi>0$,
$$(-\Delta_{\R^2}+\xi)\otimes\id\leq -\Delta+\xi$$
which implies \cite[Thm. VI.2.21]{kato}
$$(-\Delta+\xi)^{-1}\leq(-\Delta_{\R^2}+\xi)^{-1}\otimes\id.$$
In particular, for all $\psi\in Q(V\otimes\id)$, where $V\geq 0$ and $Q(V)=Q(-\Delta_{\R^2})=\mathcal{H}^{1}(\R^{2})$,
$$\langle\psi,\sqrt{V\otimes\id}~(-\Delta+\xi)^{-1}\sqrt{V\otimes\id}~\psi\rangle\leq\big\langle\psi,\big[\sqrt{V}~(-\Delta_{\R^2}+\xi)^{-1}\sqrt{V}\big]\otimes\id~\psi\big\rangle$$
from which it follows that
$$\|\sqrt{V\otimes\id}~(-\Delta+\xi)^{-1}\sqrt{V\otimes\id}\|\leq\|\sqrt{V}~(-\Delta_{\R^2}+\xi)^{-1}\sqrt{V}\|$$
or equivalently
\begin{equation}\label{eq:free_control}
\|(-\Delta+\xi)^{-1/2}(V\otimes\id)(-\Delta+\xi)^{-1/2}\|\leq\|(-\Delta_{\R^2}+\xi)^{-1/2}V(-\Delta_{\R^2}+\xi)^{-1/2}\|. 
\end{equation}

Let us denote
\begin{equation*}
 \mathscr{V}^{a}:=Z\sum_{i=1}^{N}\left(\frac{1}{\varrho_{i}}-V_{en}^{a}(\varrho_{i})\right)-\sum_{1\leq i<j\leq N}\left(\frac{1}{\varrho_{i,j}}-V_{ee}^{a}(\varrho_{i,j})\right).
\end{equation*}

\begin{proposition}\label{prop:eff_2d_est}
For any $0<a<1/2$, we have
 \begin{equation}\label{eq:eff_2d_est}
 \begin{split}
 \|(-\Delta+2)^{-1/2}|\mathscr{V}^{a}|&(-\Delta+2)^{-1/2}\|\\
 \leq& 2\sqrt{3}\left[ZN\left(\frac{1}{4}-\frac{1}{\pi^2}\right)+\binom{N}{2}\left(\frac{1}{3}-\frac{5}{4\pi^2}\right)\right]a |\ln a|\\
 &+4\sqrt{2}\left[ZN\left(\frac{1}{4}-\frac{1}{\pi^2}\right)^{1/2}+\binom{N}{2}\left(\frac{1}{3}-\frac{5}{4\pi^2}\right)^{1/2}\right]a.
 \end{split}
 \end{equation}
\end{proposition}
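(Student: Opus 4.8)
The plan is to reduce the $N$-particle estimate to $N$ single-particle (electron--nucleus) and $\binom{N}{2}$ two-particle (electron--electron) contributions, each controlled by the two-dimensional bound of Lemma~\ref{lem:gen_dif_est}. First I would record the pointwise structure of the perturbation: writing $g_{en}(\varrho)=\varrho^{-1}-V_{en}^{a}(\varrho)=\varrho^{-1}W_{en}(\varrho/a)$ and $g_{ee}(\varrho)=\varrho^{-1}-V_{ee}^{a}(\varrho)=\varrho^{-1}W_{ee}(\varrho/a)$, the bounds $0\leq W_{en},W_{ee}\leq 1$ established above give $g_{en},g_{ee}\geq 0$, whence
$$|\mathscr{V}^{a}|\leq Z\sum_{i=1}^{N}g_{en}(\varrho_{i})+\sum_{1\leq i<j\leq N}g_{ee}(\varrho_{i,j})=:G$$
as multiplication operators. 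Since $(-\Delta+2)^{-1/2}\geq 0$, conjugation preserves this order and both sides stay positive, so $\|(-\Delta+2)^{-1/2}|\mathscr{V}^{a}|(-\Delta+2)^{-1/2}\|\leq\|(-\Delta+2)^{-1/2}G(-\Delta+2)^{-1/2}\|$, and the triangle inequality splits the right-hand side into
$$Z\sum_{i}\|(-\Delta+2)^{-1/2}g_{en}(\varrho_{i})(-\Delta+2)^{-1/2}\|+\sum_{i<j}\|(-\Delta+2)^{-1/2}g_{ee}(\varrho_{i,j})(-\Delta+2)^{-1/2}\|.$$

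For the electron--nucleus terms this is immediate. Each $g_{en}(\varrho_{i})$ is a genuine single-particle potential, and it is precisely the difference $\varrho^{-1}-V^{a}$ estimated in (\ref{eq:dif_est}) with $W=W_{en}$; hence the free-control inequality (\ref{eq:free_control}) with $\xi=2$ replaces $-\Delta$ by the single two-dimensional Laplacian, and Lemma~\ref{lem:gen_dif_est} bounds each summand by $2\sqrt{3}\,a|\ln a|\int_{\R_+}W_{en}(\varrho)\dd\varrho+4\sqrt{2}\,a\big(\int_{\R_+}W_{en}(\varrho)\dd\varrho\big)^{1/2}$. Summing over $i$ and inserting $\int_{\R_+}W_{en}(\varrho)\dd\varrho=\tfrac14-\tfrac1{\pi^{2}}$ (a direct integration from the definition of $V_{en}^{1}$) yields the $ZN$-terms of (\ref{eq:eff_2d_est}).

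The electron--electron terms are the one real obstacle, since $g_{ee}(\varrho_{i,j})$ depends on two coordinates and so is not literally of the form $V\otimes\id$ to which (\ref{eq:free_control}) applies. The key is to control $-\Delta$ from below by a single particle's Laplacian, $-\Delta_{\bm{\varrho}_{i}}\leq-\Delta$, which by operator monotonicity of $-x^{-1}$ gives $(-\Delta+2)^{-1}\leq(-\Delta_{\bm{\varrho}_{i}}+2)^{-1}\otimes\id$. Sandwiching with $\sqrt{g_{ee}(\varrho_{i,j})}$ and freezing $\bm{\varrho}_{j}$ (together with the remaining coordinates), $g_{ee}(|\bm{\varrho}_{i}-\bm{\varrho}_{j}|)$ is, in the variable $\bm{\varrho}_{i}$, a translate of the centered radial potential $g_{ee}(|\cdot|)$; since $-\Delta_{\bm{\varrho}_{i}}$ is translation invariant, each fiber norm is independent of $\bm{\varrho}_{j}$ and equals the two-dimensional quantity $\|(-\Delta_{2D}+2)^{-1/2}g_{ee}(|\cdot|)(-\Delta_{2D}+2)^{-1/2}\|$. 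The point is that this argument introduces no rescaling of the radial variable: one might be tempted to use the center-of-mass substitution $U$ of Lemma~\ref{lem:int_term_bound}, but that would replace the argument of $g_{ee}$ by $\sqrt{2}\,|\bm{t}|$ and alter the constants, whereas the translation argument leaves $g_{ee}(|\cdot|)$ exactly of the form (\ref{eq:Va_def}) with $W=W_{ee}$, so that Lemma~\ref{lem:gen_dif_est} applies verbatim.

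Putting the pieces together, each electron--electron summand is bounded by $2\sqrt{3}\,a|\ln a|\int_{\R_+}W_{ee}(\varrho)\dd\varrho+4\sqrt{2}\,a\big(\int_{\R_+}W_{ee}(\varrho)\dd\varrho\big)^{1/2}$; summing over the $\binom{N}{2}$ pairs and substituting $\int_{\R_+}W_{ee}(\varrho)\dd\varrho=\tfrac13-\tfrac5{4\pi^{2}}$ (again a direct integration, now from the definition of $V_{ee}^{1}$) gives the $\binom{N}{2}$-terms of (\ref{eq:eff_2d_est}), and adding the electron--nucleus contribution proves the claim. I expect the only delicate points to be (i) checking that $-\Delta_{\bm{\varrho}_{i}}\leq-\Delta$ together with translation invariance genuinely reproduces (\ref{eq:free_control}) for the two-body potential, and (ii) the closed-form evaluation of the two $L^{1}$-norms of $W_{en}$ and $W_{ee}$, which is routine but must be carried out to pin down the constants.
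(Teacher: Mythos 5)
Your proof is correct, and its overall architecture (splitting $|\mathscr{V}^{a}|$ into $N$ electron--nucleus and $\binom{N}{2}$ electron--electron contributions, reducing each to the two-dimensional estimate of Lemma~\ref{lem:gen_dif_est}, and evaluating $\int_{\R_+}W_{en}$ and $\int_{\R_+}W_{ee}$) matches the paper's. The one step where you genuinely diverge is the reduction of the two-body term. The paper does pass to the center-of-mass frame: it uses (\ref{eq:laplace_est}) to replace $-\Delta_{\bm{\varrho}_{1}}-\Delta_{\bm{\varrho}_{2}}$ by $-\Delta_{\bm{t}}$ with $\bm{t}=2^{-1/2}\bm{\varrho}_{1,2}$, and then removes the $\sqrt{2}$-rescaling exactly via the computation $\|(-\Delta_{\bm{\varrho}_{1,2}}+2)^{1/2}(-\Delta_{\bm{t}}+2)^{-1/2}\|=1$, so that Lemma~\ref{lem:gen_dif_est} is applied in the unrescaled relative coordinate and no constants are altered --- your worry about the substitution spoiling the constants does not materialize in the paper's version. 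You instead bound $-\Delta$ from below by a single particle's Laplacian $-\Delta_{\bm{\varrho}_{1}}$ and exploit translation invariance fiberwise in the direct-integral decomposition over $(\bm{\varrho}_{2},\ldots,\bm{\varrho}_{N})$; this is equally valid and yields the identical bound, at the price of having to argue that the operator norm is the essential supremum of the (constant) fiber norms. Both routes are legitimate; the paper's has the minor advantage of staying within the symmetric relative/center-of-mass picture already set up in Lemma~\ref{lem:int_term_bound}, while yours avoids any change of variables and makes the application of (\ref{eq:Va_def})--(\ref{eq:dif_est}) completely literal.
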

\begin{proof}
By the triangle inequality and \eqref{eq:free_control},
 \begin{equation*}
 \begin{split}
  \|(-\Delta&+2)^{-1/2}|\mathscr{V}^{a}|(-\Delta+2)^{-1/2}\|\\
  \leq& ZN\big\|(-\Delta_{\bm{\varrho}_{1}}+2)^{-1/2}(\varrho_{1}^{-1}-V_{en}^{a}(\varrho_{1}))(-\Delta_{\bm{\varrho}_{1}}+2)^{-1/2}\big\|\\
  &+\binom{N}{2}\big\|(-\Delta_{\bm{\varrho}_{1}}-\Delta_{\bm{\varrho}_{2}}+2)^{-1/2}(\varrho_{1,2}^{-1}-V_{ee}^{a}(\varrho_{1,2}))(-\Delta_{\bm{\varrho}_{1}}-\Delta_{\bm{\varrho}_{2}}+2)^{-1/2}\big\|.
 \end{split}
 \end{equation*}
Put $\bm{t}:=2^{-1/2}(\bm{\varrho}_2-\bm{\varrho}_1)=:2^{-1/2}\bm{\varrho}_{1,2}$. Then in the second term we may estimate as follows,
\begin{align*}
 \big\|(-\Delta_{\bm{\varrho}_{1}}-\Delta_{\bm{\varrho}_{2}}+2)^{-1/2}&(\varrho_{1,2}^{-1}-V_{ee}^{a}(\varrho_{1,2}))(-\Delta_{\bm{\varrho}_{1}}-\Delta_{\bm{\varrho}_{2}}+2)^{-1/2}\big\|\\
 &\leq\big\|(-\Delta_{\bm{t}}+2)^{-1/2}(\varrho_{1,2}^{-1}-V_{ee}^{a}(\varrho_{1,2}))(-\Delta_{\bm{t}}+2)^{-1/2}\big\|.
\end{align*}
Here we employed \eqref{eq:laplace_est} and \eqref{eq:free_control}.
Since 
$$\|(-\Delta_{\bm{\varrho}_{1,2}}+2)^{1/2}(-\Delta_{\bm{t}}+2)^{-1/2}\|^{2}=\frac{1}{2}\|(-\Delta_{\bm{\varrho}_{1,2}}+2)^{1/2}(-\Delta_{\bm{\varrho}_{1,2}}+1)^{-1/2}\|^{2}=1,$$
we obtain 
\begin{align*}
\|(-\Delta_{\bm{\varrho}_{1}}-\Delta_{\bm{\varrho}_{2}}+2)^{-1/2}&(\varrho_{1,2}^{-1}-V_{ee}^{a}(\varrho_{1,2}))(-\Delta_{\bm{\varrho}_{1}}-\Delta_{\bm{\varrho}_{2}}+2)^{-1/2}\big\|\\
&\leq \|(-\Delta_{\bm{\varrho}_{1,2}}+2)^{-1/2}(\varrho_{1,2}^{-1}-V_{ee}^{a}(\varrho_{1,2}))(-\Delta_{\bm{\varrho}_{1,2}}+2)^{-1/2}\big\|.
\end{align*}
Lemma \ref{lem:gen_dif_est} now yields
\begin{equation*}
 \begin{split}
 \|(-\Delta+2)^{-1/2}&|\mathscr{V}^{a}|(-\Delta+2)^{-1/2}\|\\
 \leq& 2\sqrt{3}\left[ZN\left(\int_{\R_{+}}W_{en}(\varrho)\dd\varrho\right)+\binom{N}{2}\left(\int_{\R_{+}}W_{ee}(\varrho)\dd\varrho\right)\right]a |\ln a|\\
 &+4\sqrt{2}\left[ZN\left(\int_{\R_{+}}W_{en}(\varrho)\dd\varrho\right)^{1/2}+\binom{N}{2}\left(\int_{\R_{+}}W_{ee}(\varrho)\dd\varrho\right)^{1/2}\right]a.
 \end{split}
 \end{equation*}
The integrals of $W_{en}$ and $W_{ee}$ may be evaluated using Fubini's theorem,
$$\int_{\R_{+}}W_{en}(\varrho)\dd\varrho=\frac{1}{4}-\frac{1}{\pi^2},\quad \int_{\R_{+}}W_{ee}(\varrho)\dd\varrho=\frac{1}{3}-\frac{5}{4\pi^2},$$
which completes the proof.
\end{proof}

Further we will need an estimate formulated in the following auxiliary lemma that in fact is a standard result (see, e.g., \cite[Chpt.~XI]{rs3}).

\begin{lemma} \label{lem:symm_resolv_eq} Assume that $A$ is semi-bounded,
$A^{-1}$ exists and is bounded, $C$ is self-adjoint and $A$ form
bounded. If\[
\alpha=\||C|^{1/2}|A|^{-1/2}\|<1\]
then $(A+C)^{-1}$ exists, is bounded and\[
\|(A+C)^{-1}-A^{-1}\|\leq\frac{\alpha^{2}\|A^{-1}\|}{1-\alpha^{2}}\,.\]
\end{lemma}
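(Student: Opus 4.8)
The plan is to make rigorous the formal factorization $A+C=|A|^{1/2}\bigl(\sgn(A)+B\bigr)|A|^{1/2}$ with $B=|A|^{-1/2}C|A|^{-1/2}$ a \emph{bounded} operator, and then to invert the bracketed operator by a Neumann series. First I would record the structural facts about the self-adjoint semi-bounded $A$: since $A^{-1}$ is bounded, $0\notin\sigma(A)$, so $U:=\sgn(A)$ is a well-defined self-adjoint unitary involution ($U^{2}=\id$, $U^{-1}=U$) commuting with $|A|$ and satisfying $A=|A|^{1/2}U|A|^{1/2}$. Consequently $A^{-1}=|A|^{-1/2}U|A|^{-1/2}$ and $\||A|^{-1}\|=\|A^{-1}\|$.

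Next I would give meaning to $B$. Writing $T$ for the bounded closure of $|C|^{1/2}|A|^{-1/2}$ (bounded precisely because $\alpha=\|T\|<\infty$) and using $C=|C|^{1/2}\sgn(C)|C|^{1/2}$, I would set $B:=T^{*}\sgn(C)\,T$. This is bounded and self-adjoint with $\|B\|\le\alpha^{2}$, since $\|\sgn(C)\|\le1$. The role of the $A$-form boundedness is that for $\psi,\phi\in Q(A)=\mathrm{Dom}(|A|^{1/2})$ one has $|C|^{1/2}\psi=T|A|^{1/2}\psi$, whence the form of $C$ factors as $\langle\psi,C\phi\rangle=\langle|A|^{1/2}\psi,B|A|^{1/2}\phi\rangle$, and the closed form $\mathfrak{q}$ of the sum $A+C$ equals $\langle|A|^{1/2}\psi,(U+B)|A|^{1/2}\phi\rangle$ on $Q(A)$.

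Then I would invert $U+B$. Factoring $U+B=U(\id+UB)$ and noting $\|UB\|\le\|B\|\le\alpha^{2}<1$, the Neumann series shows $\id+UB$, hence $U+B$, is boundedly invertible with $\|(U+B)^{-1}\|\le(1-\alpha^{2})^{-1}$. Using the form factorization and the first representation theorem, I would check that $R:=|A|^{-1/2}(U+B)^{-1}|A|^{-1/2}$ maps $\mathcal{H}$ into $Q(A)$ and satisfies $\mathfrak{q}(\phi,Rf)=\langle\phi,f\rangle$ for all $\phi\in Q(A)$, which identifies $Rf\in\mathrm{Dom}(A+C)$ with $(A+C)Rf=f$; thus $A+C$ is invertible and $(A+C)^{-1}=R$.

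Finally, subtracting $A^{-1}=|A|^{-1/2}U|A|^{-1/2}$ and using $U^{2}=\id$ to write $(U+B)^{-1}-U=-(U+B)^{-1}BU$, I would bound $\|(U+B)^{-1}-U\|\le\alpha^{2}(1-\alpha^{2})^{-1}$ and sandwich with $|A|^{-1/2}$ to pick up the factor $\||A|^{-1/2}\|^{2}=\||A|^{-1}\|=\|A^{-1}\|$, yielding the claimed estimate. I expect the main obstacle to be not the algebra but the bookkeeping at the form level: justifying that $B$ is genuinely bounded (this is exactly where $\alpha<\infty$ enters) and that the factored form $\mathfrak{q}$ is the closed form whose associated operator is the form sum $A+C$, so that the representation theorem legitimately identifies $R$ as the resolvent rather than a merely formal inverse.
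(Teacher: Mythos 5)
The paper does not prove this lemma at all: it is stated as a standard result with a pointer to Reed--Simon III, Ch.~XI, so there is no in-paper argument to compare against. Your proposal is a correct and complete rendering of exactly that standard symmetrized-resolvent (Konno--Kuroda type) argument: the factorization $A=|A|^{1/2}\sgn(A)|A|^{1/2}$, the bounded form perturbation $B=T^{*}\sgn(C)T$ with $\|B\|\leq\alpha^{2}$, the Neumann-series inversion of $\sgn(A)+B$, the identification of $|A|^{-1/2}(\sgn(A)+B)^{-1}|A|^{-1/2}$ as $(A+C)^{-1}$ via the first representation theorem, and the identity $(\sgn(A)+B)^{-1}-\sgn(A)=-(\sgn(A)+B)^{-1}B\sgn(A)$ yielding the stated bound; you also correctly flag the one genuinely delicate point, namely that the form-level bookkeeping (boundedness of $B$ and the identification of the form sum) is where the hypotheses $\alpha<\infty$ and form-boundedness are actually used.
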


\begin{theorem}\label{theo:eff_2d_coul}
Let $d_{N,Z}(\xi):=\mathrm{dist}(\xi,\sigma(h_{N,Z}))$ and $\mu\leq\inf\sigma(h_{N,Z})-2$. Then, for every $\xi\in\mathrm{Res}(h_{N,Z})\cap\R$, there exists $a_{0}(\xi)>0$ (which is given within the proof) such that for all $a$, $0<a<a_{0}(\xi)$, one has $\xi\in\mathrm{Res}(H_{\mathrm{eff}}^{a}-NE_{1}^{a})$ and
\begin{equation*}
\begin{split}
\|(H_{\mathrm{eff}}^{a}-NE_{1}^{a}-\xi)^{-1}&-(h_{N,Z}-\xi)^{-1}\|\\
&\leq\frac{2}{d_{N,Z}(\xi)}\max\lsz 1,\frac{-\mu}{d_{N,Z}(\xi)}\psz C_{1}(N,Z)^{2}C_{2}(N,Z)\,a|\ln{a}|,
\end{split}
\end{equation*}
The constants $C_{1}(N,Z)$ and $C_{2}(N,Z)$ are given by \eqref{eq:free_vs_coul_bound} and  \eqref{eq:pot_diff_bound}, respectively.
\end{theorem}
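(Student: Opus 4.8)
The plan is to write $H_{\mathrm{eff}}^a-NE_1^a$ as a form perturbation of $h_{N,Z}$ and then to invoke Lemma~\ref{lem:symm_resolv_eq}. Comparing the explicit actions of the two operators one reads off
\[
H_{\mathrm{eff}}^a-NE_1^a=h_{N,Z}+\mathscr{V}^a,
\]
so I would put $A=h_{N,Z}-\xi$ and $C=\mathscr{V}^a$. Because $h_{N,Z}$ is self-adjoint and bounded below and $\xi\in\mathrm{Res}(h_{N,Z})\cap\R$, the operator $A$ is semibounded, $A^{-1}$ is bounded, and $\|A^{-1}\|=d_{N,Z}(\xi)^{-1}$. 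The potential $\mathscr{V}^a$ is a real (hence symmetric) multiplication operator which, as the estimates below show, is $A$-form bounded with relative bound $<1$ for small $a$; thus the whole statement will follow once the quantity $\alpha=\||\mathscr{V}^a|^{1/2}|h_{N,Z}-\xi|^{-1/2}\|$ has been controlled and shown to be $<1$, since then Lemma~\ref{lem:symm_resolv_eq} yields $\xi\in\mathrm{Res}(H_{\mathrm{eff}}^a-NE_1^a)$ together with the bound $\|(A+C)^{-1}-A^{-1}\|\le\alpha^2\|A^{-1}\|/(1-\alpha^2)$.

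Using $\|\,|C|^{1/2}|A|^{-1/2}\|^2=\|\,|A|^{-1/2}|C|\,|A|^{-1/2}\|$, the task becomes to estimate $\alpha^2=\||h_{N,Z}-\xi|^{-1/2}|\mathscr{V}^a|\,|h_{N,Z}-\xi|^{-1/2}\|$. Fixing $\mu\le\inf\sigma(h_{N,Z})-1$ (so that $h_{N,Z}-\mu\ge 1$) and inserting the factors $(h_{N,Z}-\mu)^{\pm1/2}$ and then $(-\Delta+2)^{\pm1/2}$, where $-\Delta=\sum_{i}-\Delta_{\bm{\varrho}_i}$ is the free part of $h_{N,Z}$, I would factorize
\[
\alpha^2\le\big\|(h_{N,Z}-\mu)^{1/2}|h_{N,Z}-\xi|^{-1/2}\big\|^2\,\big\|(-\Delta+2)^{1/2}(h_{N,Z}-\mu)^{-1/2}\big\|^2\,\big\|(-\Delta+2)^{-1/2}|\mathscr{V}^a|(-\Delta+2)^{-1/2}\big\|.
\]
By Proposition~\ref{prop:eff_2d_est} (reducing, as in its proof, the two-particle terms to a single effective coordinate via (\ref{eq:free_control}) and (\ref{eq:laplace_est})) the last factor is $O(a|\ln a|)$; absorbing the linear term into $a|\ln a|$ for $0<a<1/\mathrm{e}$ defines the constant $C_2(N,Z)$ through
\begin{equation}\label{eq:pot_diff_bound}
\big\|(-\Delta+2)^{-1/2}|\mathscr{V}^a|(-\Delta+2)^{-1/2}\big\|\le C_2(N,Z)\,a|\ln a|.
\end{equation}
The middle factor is finite: choosing in the KLMN estimate of Proposition~\ref{prop:klmn_2d} the parameter $\epsilon$ so that the relative bound equals $1/2$ gives an operator inequality $-\Delta+2\le\mathrm{const}\cdot(h_{N,Z}-\mu)$, hence
\begin{equation}\label{eq:free_vs_coul_bound}
C_1(N,Z):=\big\|(-\Delta+2)^{1/2}(h_{N,Z}-\mu)^{-1/2}\big\|<\infty.
\end{equation}

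The remaining factor is purely spectral, and this is the step I expect to be the main obstacle. Since $B:=(h_{N,Z}-\mu)^{1/2}|h_{N,Z}-\xi|^{-1/2}=f(h_{N,Z})$ with $f(\lambda)=\big((\lambda-\mu)/|\lambda-\xi|\big)^{1/2}$, the functional calculus gives $\|B\|^2=\sup_{\lambda\in\sigma(h_{N,Z})}(\lambda-\mu)/|\lambda-\xi|$, and I must prove
\[
\sup_{\lambda\in\sigma(h_{N,Z})}\frac{\lambda-\mu}{|\lambda-\xi|}\le\max\Big\{1,\frac{-\mu}{d_{N,Z}(\xi)}\Big\}.
\]
Here the location of the spectrum is decisive: by the HVZ theorem $\sigma_{ess}(h_{N,Z})=[\Sigma,\infty)$ with $\Sigma=\inf\sigma(h_{N-1,Z})\le 0$, so every real resolvent point obeys $\xi<\Sigma\le 0$, and in particular the spectral point nearest to $\xi$ from above is $\le 0$. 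Splitting $\sigma(h_{N,Z})$ into the parts $\lambda<\xi$ and $\lambda>\xi$ and using that $g(\lambda)=(\lambda-\mu)/|\lambda-\xi|$ is monotone on each of them (its derivative carries the constant sign of $\xi-\mu$ resp.\ $\mu-\xi$, and $\mu<\xi$ whenever any spectrum lies below $\xi$), one sees that the supremum is attained either at a spectral point adjacent to $\xi$ (on either side)—whose value is $\le(-\mu)/d_{N,Z}(\xi)$, since any such point is $\le 0$ and at distance $\ge d_{N,Z}(\xi)$ from $\xi$—or in the limit $\lambda\to\infty$, where $g\to 1$. Combining this with (\ref{eq:pot_diff_bound}) and (\ref{eq:free_vs_coul_bound}) gives $\alpha^2\le\max\{1,-\mu/d_{N,Z}(\xi)\}\,C_1(N,Z)^2C_2(N,Z)\,a|\ln a|$. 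Finally I would let $a_0(\xi)$ be the largest $a\in(0,\min\{1/2,1/\mathrm{e}\})$ for which this right-hand side does not exceed $1/2$; then $\alpha<1$ and $1-\alpha^2\ge 1/2$, so Lemma~\ref{lem:symm_resolv_eq} applies and, inserting $\|A^{-1}\|=d_{N,Z}(\xi)^{-1}$ and $1/(1-\alpha^2)\le 2$, produces exactly the asserted estimate.
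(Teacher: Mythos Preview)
Your proof is correct and follows the same strategy as the paper: apply Lemma~\ref{lem:symm_resolv_eq} with $A=h_{N,Z}-\xi$ and $C=\mathscr{V}^a$, factor $\alpha^{2}$ through $(h_{N,Z}-\mu)^{1/2}$ and the free resolvent, invoke Proposition~\ref{prop:eff_2d_est} for the potential factor, and bound the spectral factor by functional calculus (your justification of $\sup_{\lambda}(\lambda-\mu)/|\lambda-\xi|\le\max\{1,-\mu/d_{N,Z}(\xi)\}$ is in fact more detailed than the paper's one-line claim). The only deviation is in handling the middle factor: the paper inserts an extra intermediate $(-\Delta-\mu)^{1/2}$ and obtains the explicit $\mu$-independent constant~(\ref{eq:free_vs_coul_bound}) for $C_{1}$ via the quadratic self-referential estimate~(\ref{eq:quadratic_est}) (using only the attractive part $-\mathcal{V}_{2D}\le Z\sum_i\varrho_i^{-1}$), whereas you appeal directly to the KLMN bound of Proposition~\ref{prop:klmn_2d}, which is equally valid but yields a less explicit constant.
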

\begin{proof}
 In Lemma \ref{lem:symm_resolv_eq}, we set
$$A=h_{N,Z}-\xi,\quad C=\mathscr{V}^{a}.$$
Then $\|A^{-1}\|=d_{N,Z}(\xi)^{-1}$ and
\begin{align*}
 \alpha^{2}=&\||h_{N,Z}-\xi|^{-1/2}|\mathscr{V}^{a}||h_{N,Z}-\xi|^{-1/2}\|\leq \|(h_{N,Z}-\mu)^{1/2}|h_{N,Z}-\xi|^{-1/2}\|^{2}\\
 &\times \|(-\Delta-\mu)^{1/2}(h_{N,Z}-\mu)^{-1/2}\|^{2}\|(-\Delta+2)^{1/2}(-\Delta-\mu)^{-1/2}\|^{2}\\
 &\times \|(-\Delta+2)^{-1/2}|\mathscr{V}^{a}|(-\Delta+2)^{-1/2}\|,
\end{align*}
where $\mu$ is chosen smaller then $(\inf\sigma(h_{N,Z})-2)$. Clearly, $\mu\leq-(2+Z^2)$ by Remark \ref{rem:h_spec}.

With the aid of the functional calculus, we have
$$\|(-\Delta+2)^{1/2}(-\Delta-\mu)^{-1/2}\|=1$$
and
$$\|(h_{N,Z}-\mu)^{1/2}|h_{N,Z}-\xi|^{-1/2}\|^{2}=\sup_{\lambda\in\sigma(h_{N,Z})}\frac{\lambda-\mu}{|\lambda-\xi|}\leq\max\lsz 1,\frac{-\mu}{d_{N,Z}(\xi)}\psz.$$
To find an upper bound for the norm of $L:=(-\Delta-\mu)^{1/2}(h_{N,Z}-\mu)^{-1/2}$ ($L$ is bounded due to the closed graph theorem), we imitate the proof of \cite[Lemma 4]{DuStTu_10},
\begin{align}\label{eq:quadratic_est}
 \|L\psi\|^{2}&=\langle\psi,L^{*}L\psi\rangle=\|\psi\|^{2}+\langle(h_{N,Z}-\mu)^{-1/2}\psi,(-\mathcal{V}_{\mathrm{2D}})(h_{N,Z}-\mu)^{-1/2}\psi\rangle\nonumber\\
 &\leq\|\psi\|^{2}+\frac{\Gamma(1/4)^{4}}{4\pi^{2}}Z\sqrt{N}\langle(h_{N,Z}-\mu)^{-1/2}\psi,\sqrt{-\Delta}(h_{N,Z}-\mu)^{-1/2}\psi\rangle\\
 &\leq\|\psi\|^{2}+\frac{\Gamma(1/4)^{4}}{4\pi^{2}}Z\sqrt{N}\|(h_{N,Z}-\mu)^{-1/2}\|\|\psi\|\|L\psi\|\nonumber.
\end{align}
Here we made use of \eqref{eq:2d_pot_neg} and the fact that $\sqrt{-\Delta}\leq\sqrt{-\Delta-\mu}$, for $\mu<0$. This implies
$$\|L\|^{2}\leq 1+\frac{\Gamma(1/4)^{4}}{4\pi^{2}}Z\sqrt{N}\|L\|.$$
Therefore, we have 
\begin{equation}\label{eq:free_vs_coul_bound}
\|L\|\leq C_{1}(N,Z):=\frac{1}{8\pi^2}\left(\Gamma\left(\frac{1}{4}\right)^4 Z\sqrt{N}+\sqrt{\Gamma\left(\frac{1}{4}\right)^8 Z^2 N+64\pi^4}\,\right).
\end{equation}

Using Proposition \ref{prop:eff_2d_est}, we conclude that
$$\alpha^{2}\leq\max\lsz 1,\frac{-\mu}{d_{N,Z}(\xi)}\psz C_{1}(N,Z)^{2}B,$$
where $B$ is the right-hand-side of \eqref{eq:eff_2d_est}. 
Moreover, for $a\leq \mathrm{e}^{-1}$, we have
$$B\leq C_{2}(N,Z) a|\ln{a}|$$
with
\begin{equation}\label{eq:pot_diff_bound}
 C_{2}(N,Z):=(2\sqrt{3}+4\sqrt{2})\left[ZN\left(\frac{1}{4}-\frac{1}{\pi^2}\right)+\binom{N}{2}\left(\frac{1}{3}-\frac{5}{4\pi^2}\right)\right].
\end{equation}

For any $\xi\in\mathrm{Res}(h_{N,Z})\cap\R$, there is $a_{0}=a_{0}(\xi)$ such that for all $a$, $0<a<a_{0}(\xi)$, one has $\alpha^2\leq 1/2$. For definiteness, 
we set $a_{0}(\xi)=\min\lsz \mathrm{e}^{-1}, \tilde{a}_{0}(\xi)\psz$, where $\tilde{a}_{0}=\tilde{a}_{0}(\xi)$ is the solution to
\begin{equation}\label{eq:a0_def}
\max\lsz 1,\frac{-\mu}{d_{N,Z}(\xi)}\psz C_{1}(N,Z)^{2}C_{2}(N,Z)\,\tilde{a}_{0}|\ln{\tilde{a}_{0}}|=1/2.
\end{equation}
The assertions of the theorem now follow immediately from Lemma \ref{lem:symm_resolv_eq}.
\end{proof}

\begin{remark}\label{rem:spec_dist}
Under the assumptions of Theorem \ref{theo:eff_2d_coul}, we have
$$\|(H_{\mathrm{eff}}^{a}-NE_{1}^{a}-\xi)^{-1}-(h_{N,Z}-\xi)^{-1}\|\leq\|(h_{N,Z}-\xi)^{-1}\|,$$ 
which, by the functional calculus and the triangle inequality, implies
$$\frac{1}{d_{\mathrm{eff}}(\xi+NE_{1}^{a})}\leq\frac{2}{d_{N,Z}(\xi)}$$
with
\begin{equation}\label{eq:deff_def}
d_{\mathrm{eff}}(\xi):=\mathrm{dist}(\xi,\sigma(H_{\mathrm{eff}}^{a})). 
\end{equation}
\end{remark}

\section{Approximation of the full Hamiltonian by the effective Hamiltonian}
Let us introduce the following notation,
$$Q^{a}:=1-P^{a},\quad  H_{\bot}^{a}:=Q^{a}H_{N,Z}^{a}Q^{a},\quad  R_{\bot}^{a}(\xi):=(H_{\bot}^{a}-\xi)^{-1}.$$
$H_{\bot}^{a}$ is well defined on $\mathrm{Dom}(H_{N,Z}^{a})$, because $\mathrm{Dom}(H_{N,Z}^{a})$ is invariant under $Q^{a}$. In what follows we will view $H_{\bot}^{a}$ as an operator acting in $\mathrm{Ran}Q^{a}$ with domain $Q^{a}\mathrm{Dom}(H_{N,Z}^{a})$.
Furthermore, denote
$$\mathscr{W}^{a}(\xi):=P^{a}\mathcal{V}Q^{a}R_{\bot}^{a}(\xi)Q^{a}\mathcal{V}P^{a},\quad R_{\text{eff}}^{\mathscr{W}}(\xi):=\left(H_{\text{eff}}^{a}-\mathscr W^{a}(\xi)-\xi\right)^{-1},$$
where
$$\mathcal{V}:=\mathcal{V}_{en}+\mathcal{V}_{ee}$$
with
$$\mathcal{V}_{en}:=-\sum_{i=1}^{N}\frac{Z}{r_{i}},\quad \mathcal{V}_{ee}:=\sum_{1\leq i<j\leq N}\frac{1}{r_{i,j}}.$$

With respect to the decomposition $L^{2}(\Omega_{a})^{\otimes^{N}}=\mathrm{Ran}P^{a}\oplus\mathrm{Ran}Q^{a}$, we have
\begin{equation*}
H_{N,Z}^{a}=\begin{pmatrix}
	    H_{\mathrm{eff}}^{a}&P^{a}H_{N,Z}^{a}Q^{a}\\
	    Q^{a}H_{N,Z}^{a}P^{a}&H_{\bot}^{a}
	    \end{pmatrix}=
	    \begin{pmatrix}
	    H_{\mathrm{eff}}^{a}&P^{a}\mathcal{V}Q^{a}\\
	    Q^{a}\mathcal{V}P^{a}&H_{\bot}^{a}.
	    \end{pmatrix}
\end{equation*}
The second equality follows from the fact that $P^{a}$ commutes with $-\Delta_{\Omega_{a}^{N}}$. By direct inspection one arrives at the so-called Feshbach formula,
\begin{equation}\label{eq:Feshbach}
(H_{N,Z}^{a}-\xi)^{-1}=\begin{pmatrix}
	     R_{\text{eff}}^{\mathscr{W}} & -R_{\text{eff}}^{\mathscr{W}}P^{a}\mathcal{V}Q^{a}R_{\bot}^{a}\\
	     -R_{\bot}^{a}Q^{a}\mathcal{V}P^{a}R_{\text{eff}}^{\mathscr{W}} &	 R_{\bot}^{a}+R_{\bot}^{a}Q^{a}\mathcal{V}P^{a}R_{\text{eff}}^{\mathscr{W}}P^{a}\mathcal{V}Q^{a}R_{\bot}^{a}
             \end{pmatrix},
\end{equation}
which holds for those $\xi\in\C$ such that $R_{\bot}^{a}(\xi)$ and $R_{\text{eff}}^{\mathscr{W}}(\xi)$ exist and are bounded on $\mathrm{Ran}Q^{a}$ and $\mathrm{Ran}P^{a}$, respectively.

From now on, consider $N\geq 2$.

\begin{proposition}\label{prop:tot_eff_1}
Let $0<a<C_{3}(N,Z)$, where
\begin{equation*}
 C_{3}(N,Z):=\sqrt{3}\pi/[2N(N-1+2Z)],
\end{equation*}
 $\xi<NE_{1}^{a}$, and $\xi\notin\sigma(H_{\mathrm{eff}}^{a}-\mathscr{W}^{a}(\xi))$. Then  $\xi\in\mathrm{Res}(H_{N,Z}^{a})$ and
 $$\|(H_{N,Z}^{a}-\xi)^{-1}-R_{\mathrm{eff}}^{\mathscr{W}}\oplus 0\|\leq\frac{a}{C_{3}(N,Z)~d_{\mathrm{eff}}^{\mathscr{W}}(\xi)}\left(1+\frac{a}{C_{3}(N,Z)}\right)+\frac{2a^{2}}{3\pi^{2}}.$$
where
$$d_{\mathrm{eff}}^{\mathscr{W}}(\xi):=\mathrm{dist}\big(\xi,\sigma(H_{\mathrm{eff}}^{a}-\mathscr{W}^{a}(\xi))\big).$$
\end{proposition}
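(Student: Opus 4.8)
The plan is to read the resolvent difference straight off the Feshbach formula (\ref{eq:Feshbach}). For real $\xi$ every operator there is self-adjoint, and $(H_{N,Z}^{a}-\xi)^{-1}-R_{\mathrm{eff}}^{\mathscr{W}}\oplus 0$ is the block matrix obtained from (\ref{eq:Feshbach}) by deleting the $R_{\mathrm{eff}}^{\mathscr{W}}$ in the upper-left corner; its two off-diagonal entries $-R_{\mathrm{eff}}^{\mathscr{W}}P^{a}\mathcal{V}Q^{a}R_{\bot}^{a}$ and $-R_{\bot}^{a}Q^{a}\mathcal{V}P^{a}R_{\mathrm{eff}}^{\mathscr{W}}$ are mutual adjoints, and its lower-right entry is $R_{\bot}^{a}+R_{\bot}^{a}Q^{a}\mathcal{V}P^{a}R_{\mathrm{eff}}^{\mathscr{W}}P^{a}\mathcal{V}Q^{a}R_{\bot}^{a}$. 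Bounding the norm by (norm of the off-diagonal part)$+$(norm of the lower-right block), everything reduces to three quantities: $\|R_{\mathrm{eff}}^{\mathscr{W}}\|=d_{\mathrm{eff}}^{\mathscr{W}}(\xi)^{-1}$, finite exactly by the hypothesis $\xi\notin\sigma(H_{\mathrm{eff}}^{a}-\mathscr{W}^{a}(\xi))$; the resolvent norm $\|R_{\bot}^{a}(\xi)\|$; and the coupling norm $\|R_{\bot}^{a}Q^{a}\mathcal{V}P^{a}\|=\|P^{a}\mathcal{V}Q^{a}R_{\bot}^{a}\|$. Writing $\beta=2Na(N-1+2Z)/(\sqrt{3}\pi)$, the targets $\|R_{\bot}^{a}\|\leq 2a^{2}/(3\pi^{2})$ and $\|R_{\bot}^{a}Q^{a}\mathcal{V}P^{a}\|\leq\beta$ make the off-diagonal part $\leq\beta/d_{\mathrm{eff}}^{\mathscr{W}}(\xi)$ and the lower-right block $\leq 2a^{2}/(3\pi^{2})+\beta^{2}/d_{\mathrm{eff}}^{\mathscr{W}}(\xi)$, whose sum is precisely the claimed bound; the existence of $R_{\bot}^{a}(\xi)$ together with that of $R_{\mathrm{eff}}^{\mathscr{W}}(\xi)$ simultaneously justifies (\ref{eq:Feshbach}) and gives $\xi\in\mathrm{Res}(H_{N,Z}^{a})$.

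The core, where the smallness of $a$ enters, is the control of $H_{\bot}^{a}$ and of the coupling $Q^{a}\mathcal{V}P^{a}$. Since $P^{a}$ projects onto the configuration with every electron in the lowest transverse mode, any vector in $\mathrm{Ran}Q^{a}$ has at least one electron in a mode $n\geq 2$, so its transverse kinetic energy is at least $(N-1)E_{1}^{a}+E_{2}^{a}=(N+3)E_{1}^{a}$, because $E_{2}^{a}=4E_{1}^{a}$. As $-\Delta$ commutes with $Q^{a}$, this gives, for $\xi<NE_{1}^{a}$, the key gap $A_{0}:=Q^{a}(-\Delta-\xi)Q^{a}\geq(N+3)E_{1}^{a}-\xi>3E_{1}^{a}=3\pi^{2}/a^{2}$, so that $\|A_{0}^{-1}\|\leq a^{2}/(3\pi^{2})$ and $\|A_{0}^{-1/2}\|\leq a/(\sqrt{3}\pi)$. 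For the potential I reuse the Hardy inequality (\ref{eq:Hardy_ineq}) and the per-term bounds $\|r_{i}^{-1}\psi\|\leq 2\|\nabla_{\bm{r}_{i}}\psi\|$, $\|r_{i,j}^{-1}\psi\|\leq 2\|\nabla_{\bm{r}_{i}}\psi\|$ derived from it in Section \ref{sec:layer}; summing over the $N$ electron--nucleus and $\binom{N}{2}$ electron--electron terms and dominating each single gradient by the full one yields the operator bound $\|\mathcal{V}\psi\|\leq N(N-1+2Z)\|(-\Delta)^{1/2}\psi\|$. Combining this with the gap, a Cauchy--Schwarz estimate of $\langle A_{0}^{-1/2}\eta,\mathcal{V}A_{0}^{-1/2}\eta\rangle$ gives $\gamma:=\|A_{0}^{-1/2}Q^{a}\mathcal{V}Q^{a}A_{0}^{-1/2}\|\leq N(N-1+2Z)a/(\sqrt{3}\pi)$, and the hypothesis $a<\sqrt{3}\pi/[2N(N-1+2Z)]$ is precisely the statement $\gamma\leq 1/2$. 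Factoring $H_{\bot}^{a}-\xi=A_{0}^{1/2}(\id+A_{0}^{-1/2}Q^{a}\mathcal{V}Q^{a}A_{0}^{-1/2})A_{0}^{1/2}$ then shows that $R_{\bot}^{a}(\xi)$ exists with $\|R_{\bot}^{a}\|\leq(1-\gamma)^{-1}\|A_{0}^{-1}\|\leq 2a^{2}/(3\pi^{2})$, the factor $(1-\gamma)^{-1}\leq 2$ accounting for the $2$ in the bound. The coupling norm is extracted by writing $\|P^{a}\mathcal{V}Q^{a}R_{\bot}^{a}\phi\|\leq N(N-1+2Z)\|(-\Delta)^{1/2}R_{\bot}^{a}\phi\|$ and bounding $\|(-\Delta)^{1/2}R_{\bot}^{a}\phi\|$ through the same factorisation and $\|A_{0}^{-1/2}\|\leq a/(\sqrt{3}\pi)$, which delivers $\|R_{\bot}^{a}Q^{a}\mathcal{V}P^{a}\|\leq\beta$.

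The main obstacle is this last estimate: turning the form-smallness $\gamma\leq 1/2$ into the operator coupling bound. The delicacy is that $\mathcal{V}$ is only $(-\Delta)^{1/2}$-bounded, so one must control $\|(-\Delta)^{1/2}R_{\bot}^{a}\phi\|$, i.e.\ compare $(-\Delta)^{1/2}=(A_{0}+\xi)^{1/2}$ with $A_{0}^{1/2}$ on $\mathrm{Ran}Q^{a}$. When $\xi\leq 0$ one has $(A_{0}+\xi)^{1/2}\leq A_{0}^{1/2}$ and the bound $\beta$ drops out cleanly; for $0<\xi<NE_{1}^{a}$ one must additionally absorb the term $\xi\|R_{\bot}^{a}\phi\|^{2}$, which is again $O(a^{2})$ because $\xi<N\pi^{2}/a^{2}$ and $\|R_{\bot}^{a}\|=O(a^{2})$, so the order in $a$ is preserved and only the numerical constant needs care. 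Once $\|R_{\bot}^{a}\|$ and the coupling norm are secured, the rest is the purely algebraic bookkeeping of the three blocks described in the first paragraph, together with the observation that the existence of $R_{\bot}^{a}(\xi)$ makes $\mathscr{W}^{a}(\xi)$, and hence $H_{\mathrm{eff}}^{a}-\mathscr{W}^{a}(\xi)$, a well-defined self-adjoint operator on $\mathrm{Ran}P^{a}$ for the $\xi$ at hand. Collecting the off-diagonal bound $\beta/d_{\mathrm{eff}}^{\mathscr{W}}(\xi)$ and the lower-right bound $2a^{2}/(3\pi^{2})+\beta^{2}/d_{\mathrm{eff}}^{\mathscr{W}}(\xi)$ reproduces the asserted inequality and, in passing, the inclusion $\xi\in\mathrm{Res}(H_{N,Z}^{a})$.
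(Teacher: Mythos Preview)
Your approach is essentially the paper's own: Feshbach formula, the transverse gap $Q^{a}(-\Delta)Q^{a}\geq NE_{1}^{a}+3\pi^{2}/a^{2}$, the symmetrized resolvent identity for $H_{\bot}^{a}-\xi$, and then the block-norm bookkeeping. Structurally there is nothing to object to.

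The quantitative step, however, does not close. Your relative bound
\[
\|\mathcal{V}\psi\|\leq N(N-1+2Z)\,\|(-\Delta)^{1/2}\psi\|
\]
is obtained by bounding each $\|\nabla_{\bm r_{i}}\psi\|$ by $\|(-\Delta)^{1/2}\psi\|$ and summing; this is a factor $\sqrt{N}$ cruder than what the constants of the proposition require. When you then pass from $(-\Delta)^{1/2}$ to $A_{0}^{1/2}$ on $\mathrm{Ran}\,Q^{a}$ you pick up
\[
\|(-\Delta)^{1/2}A_{0}^{-1/2}\|^{2}=\|1+\xi A_{0}^{-1}\|\leq 1+\tfrac{N}{3}\qquad (0<\xi<NE_{1}^{a}),
\]
so your argument yields $\gamma\leq N(N-1+2Z)\,(a/\sqrt{3}\pi)\sqrt{1+N/3}$ and, for the coupling, $\|\mathcal{V}Q^{a}R_{\bot}^{a}\|\leq \beta\sqrt{1+N/3}$ rather than $\gamma\leq N(N-1+2Z)\,a/(\sqrt{3}\pi)$ and $\beta$. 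Since the regime of interest is precisely $0<\xi<NE_{1}^{a}$ (for small $a$ the relevant spectral parameters $\lambda+NE_{1}^{a}$ are positive), the remark ``only the numerical constant needs care'' is exactly where the proof is incomplete: the hypothesis $a<\sqrt{3}\pi/[2N(N-1+2Z)]$ no longer forces $\gamma\leq 1/2$, and the final inequality is not the one asserted.

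The paper repairs this by working with the \emph{squares} of the potentials through Hardy: $\mathcal{V}_{en}^{2}\leq 4Z^{2}N(-\Delta)$ and $\mathcal{V}_{ee}^{2}\leq N(N-1)^{2}(-\Delta)$. Then
\[
\|\mathcal{V}_{en}Q^{a}R_{0}^{1/2}\|^{2}\leq 4Z^{2}N\,\|R_{0}^{1/2}Q^{a}(-\Delta)Q^{a}R_{0}^{1/2}\|=4Z^{2}N(1+\xi R_{0})\leq 4Z^{2}N\Big(1+\tfrac{N}{3}\Big)\leq 4Z^{2}N^{2},
\]
the last step using the standing assumption $N\geq 2$ (stated just before the proposition), and similarly for $\mathcal{V}_{ee}$. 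This gives the sharp intermediary $\|\mathcal{V}Q^{a}R_{0}^{1/2}\|\leq N(N-1+2Z)$ with the $\xi$-dependence already absorbed, from which both $\gamma\leq N(N-1+2Z)\,a/(\sqrt{3}\pi)$ and $\|\mathcal{V}Q^{a}R_{\bot}^{a}\|\leq\beta$ follow via $\|R_{0}\|\leq a^{2}/(3\pi^{2})$ and $\|R_{\bot}^{a}\|\leq 2a^{2}/(3\pi^{2})$. In short: square the potential before invoking Hardy, and use $N\geq 2$ to swallow the $(1+N/3)$; your Cauchy--Schwarz-on-$\mathcal V$ route loses exactly the factor needed to reach the stated constants.
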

\begin{proof}
The  proof is strongly inspired by a similar one in \cite{BrDu_06}.

With the help of the following formula,
$$\left\|\begin{pmatrix}0 & A\\
A^{\dagger} & 0\end{pmatrix}\right\|^{2}=\|AA^{\dagger}\|=\|A\|^{2},$$
one derives that
\begin{equation}\label{eq:Feshbach_est}
\begin{split}
\|(H_{N,Z}^{a}-\xi)^{-1}-R_{\text{eff}}^{\mathscr{W}}(\xi)&\oplus0\|\leq\|R_{\text{eff}}^{\mathscr{W}}P^{a}\mathcal{V}Q^{a}R_{\bot}^{a}\|+\|R_{\bot}^{a}Q^{a}\mathcal{V}P^{a}R_{\text{eff}}^{\mathscr{W}}P^{a}\mathcal{V}Q^{a}R_{\bot}^{a}\|\\
&+\|R_{\bot}^{a}\|\leq\frac{1}{d_{\mathrm{eff}}^{\mathscr{W}}}\,\|\mathcal{V}Q^{a}R_{\bot}^{a}\|\left(1+\|\mathcal{V}Q^{a}R_{\bot}^{a}\|\right)+\|R_{\bot}^{a}\|.
\end{split}
\end{equation}

Since 
$$T_{\bot}:=Q^{a}(-\Delta_{\Omega_{a}^{N}})Q^{a}\geq (N-1)E_{1}^{a}+E_{2}^{a}=NE_{1}^{a}+\frac{3\pi^{2}}{a^{2}},$$
we have
\begin{equation}\label{eq:R_0_bound}
0\leq R_{0}:=(T_{\bot}-\xi)^{-1}\leq\frac{a^{2}}{3\pi^{2}}.
\end{equation}
Further, let us estimate $\big\|\mathcal{V}Q^{a}R_{0}^{\,1/2}\big\|=\big\|R_{0}^{\,1/2}Q^{a}\mathcal{V}^{2}Q^{a}R_{0}^{\,1/2}\big\|^{1/2}$. Since $\mathcal{V}^{2}\leq\mathcal{V}_{en}^{2}+\mathcal{V}_{ee}^{2}$, we can can find bounds for the $en$ and $ee$-terms separately.

{\it Bound for $\big\|\mathcal{V}_{en}Q^{a}R_{0}^{\,1/2}\big\|$}: The following estimate,
$$\mathcal{V}_{en}^{2}\leq \frac{Z^{2}}{2}\sum_{i,j}\Big(\frac{1}{r_{i}^{2}}+\frac{1}{r_{j}^{2}}\Big)=Z^{2}N\sum_{i}\frac{1}{r_{i}^{2}}$$
together with the Hardy inequality \eqref{eq:Hardy_ineq},
implies
\begin{align*}
R_{0}^{\,1/2}Q^{a}\mathcal{V}_{en}^{2}Q^{a}R_{0}^{\,1/2}&\leq 4Z^{2}N~R_{0}^{\,1/2}Q^{a}(-\Delta_{\Omega_{a}^{N}})Q^{a}R_{0}^{\,1/2}=4Z^{2}N(Q^{a}+\xi R_{0})\\
&\leq 4Z^{2}N(1+N/3)\leq 4Z^{2}N^{2},
\end{align*}
whenever $N\geq 2$, and so
\begin{equation}\label{eq:Ven_bound}
\big\|\mathcal{V}_{en}Q^{a}R_{0}^{\,1/2}\big\|\leq 2ZN.
\end{equation}

{\it Bound for $\big\|\mathcal{V}_{ee}Q^{a}R_{0}^{\,1/2}\big\|$:} In \cite[Lemma 3.2]{BrDu_06} it was deduced directly from the Hardy inequality that
$$\frac{1}{r_{i,j}^{2}}\leq 2(-\Delta_{\bm{r}_{i}}-\Delta_{\bm{r}_{j}}).$$
The same holds true on $\mathcal{H}^1_{0}(\Omega_{a})^{\otimes 2}$. Using this result we have
$$\mathcal{V}_{ee}^{2}\leq\frac{1}{2}\sum_{i<j,\,k<l}\Big(\frac{1}{r_{i,j}^{2}}+\frac{1}{r_{k,l}^{2}}\Big)\leq N(N-1)\sum_{i<j}(-\Delta_{\bm{r}_{i}}-\Delta_{\bm{r}_{j}})=N(N-1)^{2}(-\Delta_{\Omega_{a}^{N}}).$$
Consequently, in the same manner as for the $en$-term,
\begin{equation}\label{eq:Vee_bound}
\big\|\mathcal{V}_{ee}Q^{a}R_{0}^{\,1/2}\big\|\leq N(N-1).
\end{equation}

{\it Bound for $\|R_{\bot}^{a}\|$:} From \eqref{eq:R_0_bound}, \eqref{eq:Ven_bound}, and \eqref{eq:Vee_bound}, it follows
\begin{align*}
 &(R_{0}^{1/2}Q^{a}\mathcal{V}_{en}Q^{a}R_{0}^{1/2})^{2}=R_{0}^{1/2}Q^{a}\mathcal{V}_{en}Q^{a}R_{0}Q^{a}\mathcal{V}_{en}Q^{a}R_{0}^{1/2}\leq\frac{1}{3}\left(\frac{2ZNa}{\pi}\right)^{2},\\
 &(R_{0}^{1/2}Q^{a}\mathcal{V}_{ee}Q^{a}R_{0}^{1/2})^{2}\leq\frac{1}{3}\left(\frac{N(N-1)a}{\pi}\right)^{2}.
\end{align*}
Thus we have
$$\|R_{0}^{1/2}Q^{a}\mathcal{V}Q^{a}R_{0}^{1/2}\|\leq \frac{2ZNa}{\sqrt{3}\pi}+\frac{N(N-1)a}{\sqrt{3}\pi}=\frac{Na}{\sqrt{3}\pi}(N-1+2Z).$$
For $a$ small enough, this bound is smaller then one, and so by the symmetrized resolvent formula,
\begin{equation}\label{eq:r_bot_bound}
R_{\bot}^{a}(\xi)=(T_{\bot}+Q^{a}\mathcal{V}Q^{a}-\xi)^{-1}=R_{0}^{\,1/2}\left(1+R_{0}^{\,1/2}Q^{a}\mathcal{V}Q^{a}R_{0}^{\,1/2}\right)^{-1}R_{0}^{\,1/2},
\end{equation}
one has $\xi\in\mathrm{Res}(H_{\bot}^{a})$ and that the resolvent $R_{\bot}^{a}(\xi)$
is positive. Moreover,
$$\|R_{\bot}^{a}(\xi)\|\leq\frac{\|R_{0}\|}{1-\big\|R_{0}^{\,1/2}Q^{a}\mathcal{V}Q^{a}R_{0}^{\,1/2}\big\|}\,.$$
For $a<\sqrt{3}\pi/[2N(N-1+2Z)]$,
\begin{equation}\label{eq:Rbot_bound}
\|R_{\bot}^{a}\|\leq2\|R_{0}\|\leq\frac{2a^{2}}{3\pi^{2}}\,.
\end{equation}

{\it Bound for $\|\mathcal{V}Q^{a}R_{\bot}^{a}\|$:} With the help of \eqref{eq:r_bot_bound},
$$\|\mathcal{V}Q^{a}R_{\bot}^{a}\|\leq\frac{\|\mathcal{V}Q^{a}R_{0}^{\,1/2}\|\,\|R_{\bot}^{a}\|^{1/2}}{\left(1-\big\|R_{0}^{\,1/2}Q^{a}\mathcal{V}Q^{a}R_{0}^{\,1/2}\big\|\right)^{\!1/2}}\leq\frac{2Na}{\sqrt{3}\pi}(N-1+2Z),$$
where we used \eqref{eq:Ven_bound}, \eqref{eq:Vee_bound}, and \eqref{eq:Rbot_bound}.
\end{proof}

The following lemma is an extension of its single-electron version \cite[Lemma 11]{DuStTu_10}.

\begin{lemma}\label{lem:W_est}
 Let $0<a<C_{3}(N,Z)$. If $\xi<NE_{1}^{a}$, then $\mathscr{W}^{a}(\xi)$ is positive and
$$\|(-\Delta+\alpha)^{-1/2}\mathscr{W}^{a}(\xi)(-\Delta+\alpha)^{-1/2}\|\leq \frac{\Gamma(1/4)^{4}N^{3/2}}{6\pi^{3}~\sqrt{\alpha}}\left(Z^{2}+\frac{(N-1)^{2}}{\sqrt{2}}\right)\,a,$$
for any $\alpha>0$.
\end{lemma}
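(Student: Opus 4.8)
The plan is to read $\mathscr{W}^a(\xi)$ as a symmetric sandwich and to exploit that, after projection onto the transverse ground state, the \emph{square} of the Coulomb singularity produces an effective potential with an $O(a^{-1})$ enhancement that cancels one of the two powers of $a$ carried by the transverse resolvent. Writing $C=Q^a\mathcal{V}P^a$, so that $\mathscr{W}^a(\xi)=C^{\dagger}R_{\bot}^a(\xi)\,C$, positivity is immediate: under the standing hypotheses $R_{\bot}^a(\xi)$ was shown to be positive in the proof of Proposition~\ref{prop:tot_eff_1}, whence $\mathscr{W}^a(\xi)\ge0$. That same proof furnishes $R_{\bot}^a=R_0^{1/2}(1+R_0^{1/2}Q^a\mathcal{V}Q^aR_0^{1/2})^{-1}R_0^{1/2}$ with the middle factor bounded by $2$, so that $R_{\bot}^a\le 2R_0$; combined with $R_0\le a^2/(3\pi^2)$ from~\eqref{eq:R_0_bound} this yields $Q^aR_{\bot}^aQ^a\le \tfrac{2a^2}{3\pi^2}Q^a$.

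First I would reduce to two single-type estimates. Conjugating the last inequality by $\mathcal{V}P^a$ and using $Q^a\le\id$ together with $\mathcal{V}^2\le\mathcal{V}_{en}^2+\mathcal{V}_{ee}^2$ (the cross term is $2\mathcal{V}_{en}\mathcal{V}_{ee}\le0$, since the commuting multiplications satisfy $\mathcal{V}_{en}\le0\le\mathcal{V}_{ee}$), I obtain the clean, cross-term-free bound
\begin{equation*}
\mathscr{W}^a\le \frac{2a^2}{3\pi^2}\,P^a\big(\mathcal{V}_{en}^2+\mathcal{V}_{ee}^2\big)P^a,
\end{equation*}
so that, with $G_\alpha:=(-\Delta+\alpha)^{-1/2}$,
\begin{equation*}
\big\|G_\alpha\mathscr{W}^aG_\alpha\big\|\le\frac{2a^2}{3\pi^2}\Big(\big\|\mathcal{V}_{en}P^aG_\alpha\big\|^2+\big\|\mathcal{V}_{ee}P^aG_\alpha\big\|^2\Big).
\end{equation*}

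The crux is the middle step: averaging the \emph{squared} singularities over the transverse ground state. Using $\mathcal{V}_{en}^2\le Z^2N\sum_i r_i^{-2}$ and $\mathcal{V}_{ee}^2\le\tfrac{N(N-1)}2\sum_{i<j}r_{i,j}^{-2}$, I would carry out the one- and two-fold $z$-integrals against $|\chi_1^a|^2$ and bound the remaining $\arctan$ by $\pi/2$, which gives the operator inequalities
\begin{equation*}
P^a r_i^{-2}P^a\le\frac{2\pi}{a\,\varrho_i},\qquad P^a r_{i,j}^{-2}P^a\le\frac{4\pi}{a\,\varrho_{i,j}}.
\end{equation*}
Here the single power $a^{-1}$ appears and cancels one factor of $a$ from the prefactor $a^2$. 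I am then reduced to $1/\varrho$ potentials in the plane: the two-dimensional Kato inequality (for $en$) and Lemma~\ref{lem:int_term_bound} together with the Cauchy--Schwarz step of Proposition~\ref{prop:klmn_2d} (for $ee$, producing the $\sqrt N$-weights and the factor $1/\sqrt2$) give $P^a\mathcal{V}_{en}^2P^a\le\frac{Z^2N^{3/2}\Gamma(1/4)^4}{2\pi a}\sqrt{-\Delta_{2D}}$ and the analogous bound for $ee$ with $\tfrac{(N-1)^2}{\sqrt2}$ in place of $Z^2$. Finally, since $-\Delta=-\Delta_{2D}+NE_1^a$ on $\mathrm{Ran}\,P^a$, the spectral theorem gives $\|G_\alpha\sqrt{-\Delta_{2D}}\,G_\alpha\|=\sup_{s\ge0}\sqrt s/(s+NE_1^a+\alpha)\le 1/(2\sqrt\alpha)$; assembling the two pieces reproduces exactly the coefficient $\frac{\Gamma(1/4)^4N^{3/2}}{6\pi^3\sqrt\alpha}\big(Z^2+\tfrac{(N-1)^2}{\sqrt2}\big)a$.

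The main obstacle is precisely this effective-potential computation: one must recognise that averaging the \emph{strong} singularity $r^{-2}$ over the thin transverse profile yields a $1/(a\varrho)$ behaviour — an $O(a^{-1})$ gain — rather than something merely bounded, for this is exactly what converts the naive $O(a^2)$ estimate into the asserted $O(a)$ one. Keeping the constants sharp ($2\pi$ and $4\pi$) hinges on the uniform bound $\arctan\le\pi/2$ in those integrals, and the clean separation into a sum of squares (no $Z(N-1)$ cross term) rests on the commuting, opposite-sign structure $\mathcal{V}_{en}\mathcal{V}_{ee}\le0$.
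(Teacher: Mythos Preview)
Your proof is correct and follows essentially the same route as the paper's: bound $Q^aR_\bot^aQ^a$ by $2a^2/(3\pi^2)$, drop the cross term via $\mathcal{V}^2\le\mathcal{V}_{en}^2+\mathcal{V}_{ee}^2$, average the squared singularities over the transverse ground state to obtain $P^ar_i^{-2}P^a\le 2\pi/(a\varrho_i)$ and $P^ar_{i,j}^{-2}P^a\le 4\pi/(a\varrho_{i,j})$, apply the two-dimensional Kato inequality together with the Cauchy--Schwarz step $\sum_i|\bm{\lambda}_i|\le\sqrt{N}\,|\bm{\lambda}|$, and conclude with $\sup_{s\ge0}\sqrt{s}/(s+\alpha)=1/(2\sqrt{\alpha})$. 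The only slip is notational: in this lemma $-\Delta$ already denotes the $2N$-dimensional Laplacian on $L^2(\R^{2N})$ (not the full layer Laplacian restricted to $\mathrm{Ran}\,P^a$), so no $NE_1^a$ shift appears in the last spectral computation---but your inequality $\le 1/(2\sqrt{\alpha})$ is of course still valid and the stated constant is reproduced exactly.
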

\begin{proof}
In course of the proof of Proposition \ref{prop:tot_eff_1} we demonstrated that under the assumptions of the lemma, $R_{\bot}^{a}(\xi)$ is positive and so is $\mathscr{W}^{a}(\xi)$. 
Moreover, using \eqref{eq:Rbot_bound} we get
\begin{align*}
 0\leq\mathscr{W}^{a}(\xi)&=P^{a}\mathcal{V}Q^{a}R_{\bot}^{a}Q^{a}\mathcal{V}P^{a}\leq\frac{2a^{2}}{3\pi^{2}}P^{a}\mathcal{V}^{2}P^{a}\leq\frac{2a^{2}}{3\pi^{2}}P^{a}(\mathcal{V}_{en}^{2}+\mathcal{V}_{ee}^{2})P^{a}\\
 &\leq \frac{2a^{2}}{3\pi^{2}}\left(Z^{2}N\sum_{i=1}^{N}P^{a}\frac{1}{r_{i}^{2}}P^{a}+\binom{N}{2}\sum_{1\leq i<j\leq N}P^{a}\frac{1}{r_{i,j}^{2}}P^{a}\right).
\end{align*}
Since
\begin{align*}
 P^{a}\frac{1}{r_{i,j}^{2}}P^{a}&=\frac{4}{a^{2}}\int\limits_{-a/2}^{a/2}\int\limits_{-a/2}^{a/2}\frac{\cos^{2}\frac{\pi s}{a}\cos^{2}\frac{\pi t}{a}}{\varrho_{i,j}^{2}+(s-t)^{2}}\dd s\dd t\leq\frac{4}{a^{2}}\int\limits_{-a/2}^{a/2}\int\limits_{-a/2-t}^{a/2-t}\frac{\dd s\dd t}{\varrho_{i,j}^{2}+s^{2}}\\
 &\leq\frac{8}{a}\int\limits_{0}^{\infty}\frac{\dd s}{\varrho_{i,j}^{2}+s^{2}}=\frac{4\pi}{a\varrho_{i,j}}
\end{align*}
and similarly
$$P^{a}\frac{1}{r_{i}^{2}}P^{a}\leq\frac{2\pi}{a\varrho_{i}},$$
we conclude that
\begin{align*}
 \mathscr{W}^{a}(\xi)&\leq\frac{4Na}{3\pi}\left(Z^{2}\sum_{i}\frac{1}{\varrho_{i}}+(N-1)\sum_{i<j}\frac{1}{\varrho_{i,j}}\right)\\
 &\leq\frac{\Gamma(1/4)^{4}Na }{3\pi^{3}}\left(Z^{2}+\frac{(N-1)^{2}}{\sqrt{2}}\right)\langle\hat{\psi},\sum_{i}|\bm{\lambda}_{i}|\hat{\psi}\rangle\\
 &\leq\frac{\Gamma(1/4)^{4}N^{3/2}a }{3\pi^{3}}\left(Z^{2}+\frac{(N-1)^{2}}{\sqrt{2}}\right)\langle\psi,\sqrt{-\Delta}\,\psi\rangle,
 \end{align*}
in the same manner as in the proof of Proposition \ref{prop:klmn_2d}.
The lemma now readily follows, since
$$\|(-\Delta+\alpha)^{-1/2}\sqrt{-\Delta}(-\Delta+\alpha)^{-1/2}\|=\sup_{\lambda\in[0,\infty)}\frac{\sqrt{\lambda}}{\lambda+\alpha}=\frac{1}{2\sqrt{\alpha}}.$$
\end{proof}

\begin{proposition}\label{prop:tot_eff_2}
Suppose that $\xi\in\mathrm{Res}(H_{\mathrm{eff}}^{a})\cap\mathbb{R}$ and set  
\begin{equation}\label{eq:mu_set}
\mu=-N\left(\frac{\Gamma(1/4)^4Z}{8\pi^2}\right)^2-2.
\end{equation}	
If $a\leq a_{1}(\xi)$, with $a_{1}(\xi)$ given by \eqref{eq:a_1}, then $\xi\notin\sigma(H_{\mathrm{eff}}^{a}-\mathscr{W}^{a}(\xi))$
and \begin{equation*}
\|R_{\mathrm{eff}}^{\mathscr{W}}(\xi)-(H_{\mathrm{eff}}^{a}-\xi)^{-1}\|\leq\frac{C_{4}(N,Z)}{d_{\mathrm{eff}}(\xi)}\max\left\lbrace 1\,,\,\frac{-\mu}{d_{\mathrm{eff}}(\xi)}\right\rbrace\,a,
\end{equation*}
where 
\begin{equation*}
 C_{4}(N,Z):=2C_{1}(N,Z)^2 \frac{\Gamma(1/4)^{4}N^{3/2}}{6\pi^{3}~\sqrt{-\mu}}\left(Z^{2}+\frac{(N-1)^{2}}{\sqrt{2}}\right)
\end{equation*}
and $d_{\mathrm{eff}}(\xi)$ is defined by \eqref{eq:deff_def}.
\end{proposition}
\begin{proof}
Due to Remark \ref{rem:eff_spectrum}, $H_{\mathrm{eff}}^{a}-NE_{1}^{a}-\mu>0$ and $\xi<NE_{1}^{a}$.

We will proceed as in the proof of Theorem \ref{theo:eff_2d_coul}. Apply Lemma \ref{lem:symm_resolv_eq} with $A=H_{\mathrm{eff}}^{a}-\xi$,
$C=-\mathscr{W}^{a}(\xi)$. We have
\begin{equation}
\begin{split}\label{eq:alpha_est}
 \alpha^{2} & =\big\||H_{\mathrm{eff}}^{a}-\xi|^{-1/2}\mathscr{W}^{a}|H_{\mathrm{eff}}^{a}-\xi|^{-1/2}\big\|\\
 & \leq\big\|(-\Delta-\mu)^{-1/2}\mathscr{W}^{a}(-\Delta-\mu)^{-1/2}\big\|\,\|(-\Delta-\mu)^{1/2}(H_{\mathrm{eff}}^{a}-NE_{1}^{a}-\mu)^{-1/2}\|^{2}\\
 & \hspace{1em}\,\times\,\big\|(H_{\mathrm{eff}}^{a}-NE_{1}^{a}-\mu)^{1/2}|H_{\mathrm{eff}}^{a}-\xi|^{-1/2}\big\|^{2}.
\end{split}
\end{equation}
Note that $\mathscr{W}^{a}(\xi)$ is positive under the assumptions of Lemma \ref{lem:W_est}.

The upper bound for $\tilde{L}:=(-\Delta-\mu)^{1/2}(H_{\mathrm{eff}}^{a}-NE_{1}^{a}-\mu)^{-1/2}$ is the same as that for the operator $L$ in the proof of Theorem \ref{theo:eff_2d_coul}. Indeed,
since 
$$-P^{a}\mathcal{V}P^{a}\leq P^{a}\mathcal{V}_{en}P^{a}\leq\sum_{i=1}^{N}\frac{Z}{\varrho_{i}}$$
we arrive at \eqref{eq:quadratic_est} with $L$ replaced by $\tilde{L}$, $\mathcal{V}_{\mathrm{2D}}$ by $P^{a}\mathcal{V}P^{a}$, and $h_{N,Z}$ by $H_{\mathrm{eff}}^{a}$.
Consequently,
\begin{equation}\label{eq:l_tilde_est}
\|\tilde{L}\|=\|(-\Delta-\mu)^{1/2}(H_{\mathrm{eff}}^{a}-NE_{1}^{a}-\mu)^{-1/2}\|\leq C_{1}(N,Z).
\end{equation}
Furthermore we observe that
\begin{equation}\label{eq:h_eff_bound}
\big\|(H_{\mathrm{eff}}^{a}-NE_{1}^{a}-\mu)^{1/2}|H_{\mathrm{eff}}^{a}-\xi|^{-1/2}\big\|^{2}\leq\max\left\lbrace 1\,,\,\frac{-\mu}{d_{\mathrm{eff}}(\xi)}\right\rbrace
\end{equation}
by the functional calculus.

Putting \eqref{eq:alpha_est}, \eqref{eq:l_tilde_est}, \eqref{eq:h_eff_bound}, and Lemma \ref{lem:W_est} together, we deduce that there exists a positive $a_{1}=a_{1}(\xi)$ such that if $a\leq a_{1}(\xi)$, then $\alpha^{2}\leq 1/2$. Lemma \ref{lem:symm_resolv_eq} now gives $\xi\notin\sigma(H_{\mathrm{eff}}^{a}-\mathscr{W}^{a}(\xi))$ and 
\begin{equation}\label{eq:eff_W_diff}
\|R_{\mathrm{eff}}^{\mathscr{W}}(\xi)-(H_{\mathrm{eff}}^{a}-\xi)^{-1}\|\leq\frac{2\alpha^{2}}{d_{\mathrm{eff}}(\xi)}.
\end{equation}

We make $a_{1}(\xi)$ definite by setting
\begin{equation}\label{eq:a_1}
a_{1}(\xi):=\min\{C_{3}(N,Z),\ \tilde{a}_1(\xi)\},
\end{equation}
where $\tilde{a}_1(\xi)$ is the unique solution of
\begin{equation}\label{eq:tilde_a_1}
\frac{\Gamma(1/4)^{4}N^{3/2}}{6\pi^{3}~\sqrt{-\mu}}\left(Z^{2}+\frac{(N-1)^{2}}{\sqrt{2}}\right)\max\lsz 1,\frac{-\mu}{d_{\mathrm{eff}}(\xi)}\psz C_{1}(N,Z)^2\, \tilde{a}_{1}=\frac{1}{2}.
\end{equation}
\end{proof}

\begin{theorem}\label{theo:total_eff_comp}
 Let $\xi\in\mathrm{Res}(H_{\mathrm{eff}}^{a})\cap\R$. If $a<a_{1}(\xi)$,
 where $a_{1}(\xi)$ is given by \eqref{eq:a_1}(with $\mu$ introduced in \eqref{eq:mu_set}), then $\xi\in\mathrm{Res}(H_{N,Z}^{a})$ and
\begin{equation}\label{eq:total_eff_bound}
\begin{split} 
\|(H_{N,Z}^{a}-\xi)^{-1}&-(H_{\mathrm{eff}}^{a}-\xi)^{-1}\oplus 0\|\\
&\leq\frac{1}{d_{\mathrm{eff}}(\xi)}\left[4C_{3}(N,Z)^{-1}+C_{4}(N,Z)\max\lsz 1,\frac{-\mu}{d_{\mathrm{eff}}(\xi)}\psz\right]\, a
 +\frac{2a^2}{3\pi^2}.
\end{split}
\end{equation}
\end{theorem} 
\begin{proof}
We may apply Proposition \ref{prop:tot_eff_2} that yields $\xi\notin\sigma(H_{\mathrm{eff}}^{a}-\mathscr{W}^{a}(\xi))$. So the assumptions of Proposition \ref{prop:tot_eff_1} are fulfilled too. Thus $\xi\in\mathrm{Res}(H^{a}_{N,Z})$. Furthermore, \eqref{eq:eff_W_diff} holds with $\alpha^2<1/2$, which implies
$$\frac{1}{d_{\mathrm{eff}}^{\mathscr{W}}(\xi)}\leq\frac{2}{d_{\mathrm{eff}}(\xi)}.$$ Therefore, we arrive at the following estimate
\begin{equation*}
 \begin{split}
 \|(H^{a}_{N,Z}-\xi)^{-1}&-(H_{\mathrm{eff}}^{a}-\xi)^{-1}\oplus0\|\\
 &\leq\|(H_{N,Z}^{a}-\xi)^{-1}-R_{\text{eff}}^{\mathscr{W}}(\xi)\oplus0\|+\|R_{\mathrm{eff}}^{\mathscr{W}}(\xi)-(H_{\mathrm{eff}}^{a}-\xi)^{-1}\|\\
 & \leq\frac{1}{d_{\mathrm{eff}}(\xi)}\left[4C_{3}(N,Z)^{-1}+C_{4}(N,Z)\max\lsz 1,\frac{-\mu}{d_{\mathrm{eff}}(\xi)}\psz\right]\, a+\frac{2a^2}{3\pi^2}.
 \end{split}
\end{equation*}
\end{proof}

\section{Approximation of the total Hamiltonian by the two-dimensional atomic Hamiltonian}\label{sec:main}
\begin{theorem} \label{theo:main}
Let $\xi\in\mathrm{Res}(h_{N,Z}+NE_{1}^{a})\cap\R$. Set $\mu$ as in \eqref{eq:mu_set}. If $a>0$ fulfills
$$a<a_{3}(\xi):=\min\lsz\mathrm{e}^{-1},\,C_{3}(N,Z),\,\tilde{a}_{0}(\xi-NE_{1}^{a}),\,a_{2}(\xi)\psz,$$
where $\tilde{a}_{0}(\xi)$ is defined by \eqref{eq:a0_def} and $a_{2}=a_{2}(\xi)$ is the solution to
$$C_{4}(N,Z)\max\lsz 1,\frac{-2\mu}{d_{N,Z}(\xi-NE_{1}^{a})}\psz a_{2}=1,$$
then $\xi\in\mathrm{Res}(H_{N,Z}^{a})$ and
\begin{equation*}
 \begin{split}
&\|(H_{N,Z}^{a}-\xi)^{-1}-(h_{N,Z}+NE_{1}^{a}-\xi)^{-1}\oplus 0\|\\
&\leq \frac{2}{d_{N,Z}(\xi-NE_{1}^{a})}\max\lsz 1,\frac{-\mu}{d_{N,Z}(\xi-NE_{1}^{a})}\psz C_{1}(N,Z)^{2}C_{2}(N,Z)\,a|\ln{a}|\\
&+\frac{2}{d_{N,Z}(\xi-NE_{1}^{a})}\left[4C_{3}(N,Z)^{-1}+C_{4}(N,Z)\max\lsz 1,\frac{-2\mu}{d_{N,Z}(\xi-NE_{1}^{a})}\psz\right]a+\frac{2a^2}{3\pi^2}.
 \end{split}
\end{equation*}
\end{theorem}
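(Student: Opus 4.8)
The plan is to chain together the two main approximation results already established---Theorem \ref{theo:eff_2d_coul} (effective Hamiltonian $\approx$ two-dimensional atom) and Theorem \ref{theo:total_eff_comp} (full Hamiltonian $\approx$ effective Hamiltonian)---via the triangle inequality for the difference of resolvents. The target spectral parameter for the full problem is $\xi$, so the natural shift to keep in mind is that $h_{N,Z}$ lives ``$NE_{1}^{a}$ below'' $H_{\mathrm{eff}}^{a}$; accordingly I would apply Theorem \ref{theo:eff_2d_coul} at the shifted point $\xi-NE_{1}^{a}$ and Theorem \ref{theo:total_eff_comp} at $\xi$, then insert the intermediate resolvent $(H_{\mathrm{eff}}^{a}-\xi)^{-1}\oplus 0$ and bound
\begin{equation*}
\|(H_{N,Z}^{a}-\xi)^{-1}-(h_{N,Z}+NE_{1}^{a}-\xi)^{-1}\oplus 0\|
\end{equation*}
by the sum of the two individual estimates. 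The first term on the right-hand side of the claimed bound is exactly the estimate of Theorem \ref{theo:eff_2d_coul} applied to $\xi-NE_{1}^{a}$ (noting $(H_{\mathrm{eff}}^{a}-NE_{1}^{a}-(\xi-NE_{1}^{a}))^{-1}=(H_{\mathrm{eff}}^{a}-\xi)^{-1}$), and the second and third terms come from Theorem \ref{theo:total_eff_comp}.

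First I would verify the hypotheses are simultaneously satisfiable. To invoke Theorem \ref{theo:eff_2d_coul} at $\xi-NE_{1}^{a}$ I need $\xi-NE_{1}^{a}\in\mathrm{Res}(h_{N,Z})$, which is precisely the standing assumption $\xi\in\mathrm{Res}(h_{N,Z}+NE_{1}^{a})$, together with $a<a_{0}(\xi-NE_{1}^{a})=\min\lsz\mathrm{e}^{-1},\tilde{a}_{0}(\xi-NE_{1}^{a})\psz$. That theorem then guarantees $\xi-NE_{1}^{a}\in\mathrm{Res}(H_{\mathrm{eff}}^{a}-NE_{1}^{a})$, i.e.\ $\xi\in\mathrm{Res}(H_{\mathrm{eff}}^{a})$, which is exactly the entry hypothesis of Theorem \ref{theo:total_eff_comp}. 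So the two results dovetail: the output of the first is the input of the second. The remaining constraints $a<\sqrt{3}\pi/[2N(N-1+2Z)]$ and $a<a_{1}(\xi)$ required by Theorem \ref{theo:total_eff_comp} are absorbed into the definition of $a_{3}(\xi)$ (the $a_{2}(\xi)$ condition plays the role of controlling the $a_{1}$-type smallness through $C_{3}$), and $a<\mathrm{e}^{-1}$ ensures the simplification of the RHS of (\ref{eq:eff_2d_est}) to the $C_{2}(N,Z)\,a|\ln a|$ form used in Theorem \ref{theo:eff_2d_coul}. Conclude $\xi\in\mathrm{Res}(H_{N,Z}^{a})$ from Theorem \ref{theo:total_eff_comp}.

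The one genuinely delicate point---and what I expect to be the main obstacle---is reconciling the spectral-distance factors. Theorem \ref{theo:eff_2d_coul} is phrased in terms of $d_{N,Z}(\xi-NE_{1}^{a})$, whereas Theorem \ref{theo:total_eff_comp} is phrased in terms of $d_{\mathrm{eff}}(\xi)=\mathrm{dist}(\xi,\sigma(H_{\mathrm{eff}}^{a}))$, and the final statement expresses everything through $d_{N,Z}(\xi-NE_{1}^{a})$. The bridge is Remark \ref{rem:spec_dist}, which gives $1/d_{\mathrm{eff}}(\xi)\leq 2/d_{N,Z}(\xi-NE_{1}^{a})$ (applying that remark at the shifted parameter, since $\sigma(H_{\mathrm{eff}}^{a})=\sigma(H_{\mathrm{eff}}^{a}-NE_{1}^{a})+NE_{1}^{a}$ and $d_{\mathrm{eff}}(\xi+NE_{1}^{a})$ in the remark matches $d_{\mathrm{eff}}(\xi)$ here after the shift). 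Substituting this into the second-theorem estimate replaces every $1/d_{\mathrm{eff}}(\xi)$ by $2/d_{N,Z}(\xi-NE_{1}^{a})$, which accounts for both the outer factor of $2$ and the appearance of $-2\mu$ in place of $-\mu$ inside the $\max$ (since $-\mu/d_{\mathrm{eff}}(\xi)\leq -2\mu/d_{N,Z}(\xi-NE_{1}^{a})$). One must be careful that the constant $\mu$ in (\ref{eq:mu_set}) is a valid lower-bound choice for \emph{both} theorems; this holds because $\inf\sigma(h_{N,Z})$ and $\inf\sigma(H_{\mathrm{eff}}^{a}-NE_{1}^{a})$ share the lower bound (\ref{eq:num_val_bound}) (the latter via Remark \ref{rem:eff_spectrum} and the bound (\ref{eq:pot_bound})), so the single $\mu$ of (\ref{eq:mu_set}) is admissible in each application.

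Finally, adding the transformed estimates and collecting the $a|\ln a|$, $a$, and $a^{2}$ contributions yields exactly the displayed three-line bound. I would close by noting that the $2a^{2}/(3\pi^{2})$ term and the $8N(N-1+2Z)/(\sqrt{3}\pi)$ coefficient pass through unchanged from Theorem \ref{theo:total_eff_comp}, while the $C_{2}$-term is the sole contribution carrying the logarithmic factor $|\ln a|$ and thus dominates the asymptotics as $a\to 0+$, giving norm-resolvent convergence of $H_{N,Z}^{a}$ to $h_{N,Z}+NE_{1}^{a}$ at rate $O(a|\ln a|)$.
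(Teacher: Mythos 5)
Your proposal is correct and follows essentially the same route as the paper: apply Theorem \ref{theo:eff_2d_coul} at the shifted point $\xi-NE_{1}^{a}$, use Remark \ref{rem:spec_dist} to convert $d_{\mathrm{eff}}(\xi)$ into $2/d_{N,Z}(\xi-NE_{1}^{a})$ (which both yields $a_{2}(\xi)\leq a_{1}(\xi)$ so that Theorem \ref{theo:total_eff_comp} applies, and produces the factor $2$ and the $-2\mu$ in the $\max$), and then combine the two estimates by the triangle inequality with the intermediate resolvent $(H_{\mathrm{eff}}^{a}-\xi)^{-1}\oplus 0$. This matches the paper's argument step for step.
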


\begin{proof}
Due to the bound on $a$, we may apply Theorem \ref{theo:eff_2d_coul} with $\xi-NE_{1}^{a}$ substituted for $\xi$. It yields  $\xi\in\mathrm{Res}(H_{\mathrm{eff}}^{a})$ and
\begin{equation}\label{eq:eff_2d_coul_shifted}
 \begin{split}
\|(H_{\mathrm{eff}}^{a}&-\xi)^{-1}-(h_{N,Z}+NE_{1}^{a}-\xi)^{-1}\|\\
&\leq\frac{2}{d_{N,Z}(\xi-NE_{1}^{a})}\max\lsz 1,\frac{-\mu}{d_{N,Z}(\xi-NE_{1}^{a})}\psz C_{1}(N,Z)^{2}C_{2}(N,Z)\,a|\ln{a}|.
 \end{split}
\end{equation}
Moreover, by Remark \ref{rem:spec_dist},
\begin{equation}\label{eq:spec_dist}
\frac{1}{d_{\mathrm{eff}}(\xi)}\leq \frac{2}{d_{N,Z}(\xi-NE_{1}^{a})}. 
\end{equation}
Therefore, $a_{2}(\xi)\leq \tilde{a}_{1}(\xi)$, where $\tilde{a}_{1}(\xi)$ is given by \eqref{eq:tilde_a_1}, and so the assumptions of Theorem \ref{theo:total_eff_comp} are also fulfilled. Thus we have
$\xi\in\mathrm{Res}(H_{N,Z}^{a})$.
Observe that
\begin{equation}
\begin{split}
 \|(H_{N,Z}^{a}-\xi)^{-1}-(h_{N,Z}+NE_{1}^{a}-&\xi)^{-1}\oplus 0\|\leq\|(H_{N,Z}^{a}-\xi)^{-1}-(H_{\mathrm{eff}}^{a}-\xi)^{-1}\oplus 0\|\\
 &+\|(H_{\mathrm{eff}}^{a}-\xi)^{-1}-(h_{N,Z}+NE_{1}^{a}-\xi)^{-1}\|.
\end{split}
\end{equation}
Putting this together with \eqref{eq:total_eff_bound}, \eqref{eq:eff_2d_coul_shifted}, and \eqref{eq:spec_dist} finishes the proof.
\end{proof}

\begin{remark}\label{rem:distance}
 If we set $\xi=NE_{1}^{a}+\delta$ with some fixed $\delta\in\mathrm{Res}(h_{N,Z})$, then $a_{3}$ does not depend on the parameter $a$. In fact, it depends only on $d_{N,Z}(\delta)$.
\end{remark}

\section{Properties of the eigenvalues}\label{sec:spectrum}

\subsection{Localization}\label{sec:localization}
Suppose that there is an isolated eigenvalue, say $\lambda$, of $h_{N,Z}$ with a finite multiplicity. 
If we set $\xi_{+}:=NE_{1}^{a}+\lambda+d$ with $0<d<\mathrm{dist}(\lambda,\sigma(h_{N,Z})\setminus\lsz\lambda\psz)/2$, then
$d_{N,Z}(\xi_{+}-NE_{1}^{a})=d$ and in the view of Theorem \ref{theo:main} and Remark \ref{rem:distance} there exists $a_{\mathrm{min}}(d)>0$ such that
for all $a<a_{\mathrm{min}}(d)$ we have
\begin{equation}\label{eq:loc_est}
 \|(H_{N,Z}^{a}-\xi_{+})^{-1}-(h_{N,Z}+NE_{1}^{a}-\xi_{+})^{-1}\oplus 0\|\leq K(d) a|\ln{a}|.
\end{equation}
Let us stress that the value of $K(d)\in\R_{+}$, as well as that of $a_{\mathrm{min}}(d)$, depends only on $d,\, N,$ and $Z$, but not on the particular eigenvalue $\lambda$ or the value of $a$.

Furthermore, let $\Gamma$ stands for the anti-clockwise oriented circle with center $NE_{1}^{a}+\lambda$ and radius $d$. With the aid of formula (3.10) in \cite[Chpt. IV]{kato} we can propagate \eqref{eq:loc_est} to all $\xi\in\Gamma$, 
\begin{equation*}
 \|(H_{N,Z}^{a}-\xi)^{-1}-(h_{N,Z}+NE_{1}^{a}-\xi)^{-1}\oplus 0\|\leq\frac{9 K(d)\,a|\ln{a}|}{1-6dK(d)\,a|\ln{a}|},
\end{equation*}
for $a$ small enough so that $6dK(d)\,a|\ln{a}|<1$.
Consequently we arrive at the following estimate for the difference of the projections $P_{\Gamma}$ and $p_{\Gamma}$ onto the spectrum of $H_{N,Z}^{a}$ and $h_{N,Z}+NE_{1}^{a}$, respectively, inside $\Gamma$,
\begin{equation}\label{eq:proj_dif_bound}
 \begin{split}
  \|P_{\Gamma}-p_{\Gamma}\oplus 0\|&=\frac{1}{2\pi}\Big\|\int_{\Gamma}(H_{N,Z}^{a}-\xi)^{-1}-(h_{N,Z}+NE_{1}^{a}-\xi)^{-1}\oplus 0\,\dd\xi\Big\|\\
 &\leq \frac{9dK(d)\,a|\ln{a}|}{1-6dK(d)\,a|\ln{a}|}.
 \end{split}
\end{equation}
The right-hand-side of \eqref{eq:proj_dif_bound} is strictly increasing on some sufficiently small right neighborhood of $0$ and it tends to zero as $a\to 0$. Consequently, $\tilde{a}_{\mathrm{min}}(d)$ exists, $0<\tilde{a}_{\mathrm{min}}(d)\leq a_{\mathrm{min}}(d)$, such that for all 
$a<\tilde{a}_{\mathrm{min}}(d)$,
$$\|P_{\Gamma}-p_{\Gamma}\oplus 0\|<1.$$
Therefore, for these values of $a$, in the $d$-neighborhood of $(\lambda+NE_{1}^{a})$ there is the exactly same number of eigenvalues (counting multiplicity) of $H_{N,Z}^{a}$ as the multiplicity of $\lambda$ in the spectrum of  $h_{N,Z}$ is.

The idea above may be applied on a finite cluster of successive eigenvalues, $\lambda_{1}\,\dots,\,\lambda_{M}$, of $h_{N,Z}$. We just take $d$ sharply smaller than a half of the minimum of isolation distances of all $\lambda_{i}$. Moreover we may perform  similar estimates as above on intervals $[\lambda_{i}+d,\lambda_{i+1}-d]$ (more concretely, we change $\lambda$ for $(\lambda_{i}+\lambda_{i+1})/2$ and $d$ for $(\lambda_{i+1}-\lambda_{i})/2-d$) to conclude that, for all $a$ small enough, there are no  eigenvalues of $H_{N,Z}^{a}$ in these intervals. 

\subsection{Analyticity}
Consider a unitary mapping $U_{a}:L^{2}(\Omega_{a})^{\otimes^N}\to L^{2}(\Omega_{1})^{\otimes^N}$ given by
$$(U_{a}\psi)(\mathbf{x}_{1},\ldots,\mathbf{x}_{N})=a^{3N/2}\psi(a\mathbf{x}_{1},\ldots,a\mathbf{x}_{N}),$$
 then
$$U_{a}H_{N,Z}^{a}U_{a}^{\dagger}=\frac{1}{a^{2}}\left(-\Delta_{\Omega_{1}^{N}}-\sum_{i=1}^{N}\frac{aZ}{r_{i}}+\sum_{i<j}^{N}\frac{a}{r_{i,j}}\right)=\frac{1}{a^2}(-\Delta_{\Omega_{1}^{N}}+a\mathcal{V})=:\tilde{H}_{N,Z}^{a}.$$
Observe that

 $\bullet$ For all $a>0$,  $\mathrm{Dom}(\tilde{H}_{N,Z}^{a})=\big(\mathcal{H}_{0}^{1}(\Omega_{1})\cap\mathcal{H}^{2}(\Omega_{1})\big)^{\otimes^N}=\mathrm{Dom}(-\Delta_{\Omega_{1}^{N}})=:\mathscr{D}$. We can also extend the definition of $\tilde{H}_{N,Z}^{a}$ to all $a\in\C\setminus\lsz 0\psz$. The resulting operator is well defined on $\mathscr{D}$ due to the Hardy inequality.
 
 $\bullet$ For all $a\in\C\setminus\{ 0\}$ and $\psi\in\mathscr{D}$, $\tilde{H}_{N,Z}^{a}\psi$ has a derivative with respect to $a$,
  and so $a\mapsto \tilde{H}_{N,Z}^{a}$ is analytic in $\C\setminus\{ 0\}$.
  
 $\bullet$ For $a\in\C$, \eqref{eq:full_pot_bound} implies that
 \begin{equation}\label{eq:inf_boundness}
 \|a\mathcal{V}\|\leq|a|\left(2Z^{2}N+N(N-1)^2\right)^{1/2}(\epsilon\|-\Delta_{\Omega_{1}^{N}}\psi\|+\epsilon^{-1}\|\psi\|).
 \end{equation}
 Thus $a\mathcal{V}$ is infinitesimally $(-\Delta_{\Omega_{1}^{N}})$-bounded. Since $-\Delta_{\Omega_{1}^{N}}$ is closed, the same holds true for $\tilde{H}_{N,Z}^{a}$\cite[Thm. IV.1.1]{kato}.
 
 $\bullet$ From \eqref{eq:inf_boundness} it follows that
 $$\|a\mathcal{V}\|\leq |a|\left(2Z^{2}N+N(N-1)^2\right)^{1/2}\left(\epsilon\|(-\Delta_{\Omega_{1}^{N}}-\xi)\psi\|+(\epsilon^{-1}+\epsilon|\xi|)\|\psi\|\right).$$
 If $\xi<NE_{1}^{1}=N\pi^{2}$ then $\xi\in\mathrm{Res}(-\Delta_{\Omega_{1}^{N}})$. Theorem IV.1.16 in \cite{kato} says that $\xi\in\mathrm{Res}(-\Delta_{\Omega_{1}^{N}}+a\mathcal{V})$ whenever
 $$|a|\left(2Z^{2}N+N(N-1)^2\right)^{1/2}\left((\epsilon^{-1}+\epsilon|\xi|)\|(-\Delta_{\Omega_{1}^{N}}-\xi)^{-1}\|+\epsilon\right)<1.$$
 Since $\|(-\Delta_{\Omega_{1}^{N}}-\xi)^{-1}\|=(N\pi^2-\xi)^{-1}$ this can be achieved with $\epsilon=|\xi|^{-1/2}$ and $\xi$ sufficiently negative. Thus the resolvent set of $\tilde{H}_{N,Z}^{a}$ is non-empty for all $a\in\C\setminus\{ 0\}$. 

So here comes the main result of this subsection.
\begin{proposition}
 $\tilde{H}_{N,Z}^{a}$ forms an analytic family of type (A) on $\C\setminus\lsz 0\psz$ and consequently it forms an analytic family in the sense of Kato, see  \cite[Thm. XII.9]{rs4}. 
\end{proposition}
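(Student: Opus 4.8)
The plan is to verify directly the two defining clauses of an analytic family of type (A) --- a common, parameter-independent operator domain, and holomorphy of $a\mapsto\tilde{H}_{N,Z}^{a}\psi$ for each fixed vector in that domain --- together with closedness of every member of the family, and then to invoke \cite[Theorem XII.9]{rs4} to pass to analyticity in the sense of Kato. All the necessary ingredients have in effect been prepared by the bulleted observations preceding the statement, so the argument mainly consists in assembling them.

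First I would record that $\mathrm{Dom}(\tilde{H}_{N,Z}^{a})=\mathscr{D}$ for every $a\in\C\setminus\lsz 0\psz$, as established in the first bullet; the Hardy inequality guarantees that $a\mathcal{V}$ maps $\mathscr{D}$ into $L^{2}(\Omega_{1})^{\otimes^{N}}$, so the operator is genuinely well defined on this fixed domain. Next, for fixed $\psi\in\mathscr{D}$ one writes $\tilde{H}_{N,Z}^{a}\psi=a^{-2}(-\Delta)\psi+a^{-1}\mathcal{V}\psi$, a Laurent polynomial in $a$ whose coefficients $(-\Delta)\psi$ and $\mathcal{V}\psi$ lie in $L^{2}(\Omega_{1})^{\otimes^{N}}$. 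Such a vector-valued map is holomorphic on $\C\setminus\lsz 0\psz$, with derivative precisely the one displayed in the second bullet. These two facts are the defining clauses of type (A).

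It then remains to check that each $\tilde{H}_{N,Z}^{a}$ is closed. Here I would use the third bullet: the estimate (\ref{eq:inf_boundness}) shows that $a\mathcal{V}$ is infinitesimally $-\Delta$-bounded, whence, by stability of closedness under relatively bounded perturbations, $-\Delta+a\mathcal{V}$ is closed on $\mathscr{D}$; multiplication by the nonzero scalar $a^{-2}$ preserves closedness, so $\tilde{H}_{N,Z}^{a}$ is closed for every $a\neq0$. At this point the family is analytic of type (A).

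Finally, to upgrade this to analyticity in the sense of Kato, I would appeal to the fourth bullet, which --- via \cite[Theorem IV.1.16]{kato} --- exhibits a real $\xi$ in the resolvent set of $\tilde{H}_{N,Z}^{a}$ for $a$ near any prescribed point, so that the resolvent is bounded and analytic there. Together with the type-(A) property this is exactly the hypothesis of \cite[Theorem XII.9]{rs4}, whose conclusion yields the assertion. I do not expect a genuine analytic obstacle: the only steps demanding care are the \emph{infinitesimal} relative bound (\ref{eq:inf_boundness}) that makes closedness and nonemptiness of the resolvent set hold uniformly in a neighborhood of a given $a_{0}$, and both of these are already secured above.
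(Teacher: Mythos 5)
Your proposal is correct and follows essentially the same route as the paper, whose ``proof'' consists precisely of the four bulleted observations you assemble: constant domain $\mathscr{D}$, holomorphy of $a\mapsto\tilde{H}_{N,Z}^{a}\psi$ as a Laurent polynomial with $L^{2}$ coefficients, closedness via the infinitesimal $-\Delta$-bound (\ref{eq:inf_boundness}), and nonemptiness of the resolvent set, followed by the appeal to \cite[Theorem XII.9]{rs4}. No discrepancies to report.
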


Consequently the analyticity statement \cite[Thm. XII.8]{rs4}  holds for the non-degenerate isolated eigenvalues of the operator $H_{N,Z}^{a}$. In particular it says that if, for some $a_{0}>0$, there is an non-degenerate isolated eigenvalue of $H_{N,Z}^{a_{0}}$, then for $a$ near $a_{0}$ there is exactly one isolated non-degenerate eigenvalue of $H_{N,Z}^{a}$ near this eigenvalue of $H_{N,Z}^{a_{0}}$. 

\begin{remark}[Monotonicity of the eigenvalues]
 In the exactly same manner as in \cite{DuHo_10}, i.e., employing the min-max principle, one can prove that the eigenvalues of $H_{N,Z}^{a}$ (if there are some) are strictly decreasing functions of $a$.
\end{remark}

\section{Reduction to the fermionic subspace}\label{sec:fermionized}
As the physical electrons are fermions, we should reduce $H^{a}_{N,Z}$ to the fermionic subspace $\wedge^{N} L^{2}(\Omega_{a})$ (the symbol $\wedge$ stands for the antisymmetric tensor product). To do so we introduce a projection $P^{AS}$ on $L^{2}(\Omega_{a})^{\otimes^{N}}$ as follows,
$$(P^{AS}\psi)(\bm{r}_{1},\ldots,\bm{r}_{N})=\frac{1}{N!}\sum_{\sigma\in S_{N}}\sgn{\sigma}~\psi(\bm{r}_{\sigma(1)},\ldots,\bm{r}_{\sigma(N)}).$$
Remark that this projection commutes with $H^{a}_{N,Z}$, i.e. $P^{AS}H^{a}_{N,Z}\subset H^{a}_{N,Z}P^{AS}$. On $\mathrm{Dom} (H^{a}_{N,Z})$, we define the {\it fermionized version} of the Hamiltonian $H^{a}_{N,Z}$ by
$$H^{a}_{N,Z,\mathrm{f}}:=P^{AS}H^{a}_{N,Z}P^{AS}=H^{a}_{N,Z}P^{AS}.$$ 
It is convenient to view $H^{a}_{N,Z,\mathrm{f}}$ as a restriction of $H^{a}_{N,Z}$ to $P^{AS}\mathrm{Dom}(H^{a}_{N,Z})$ acting in 
$$P^{AS}L^{2}(\Omega_{a})^{\otimes^{N}}\equiv\wedge^{N}L^{2}(\Omega_{a}).$$
Then $H^{a}_{N,Z,\mathrm{f}}$ is self-adjoint due to the following simple observation.
\begin{lemma}\label{lem:sa_reduced}
 Let $H$ be a self-adjoint operator on a Hilbert space $\mathscr{H}$ and $P$ be an orthogonal projection on $\mathscr{H}$. If
\begin{enumerate}
 \item $P\mathrm{Dom}(H)$ is dense in $P\mathscr{H}$
 \item $PH\subset HP$,
\end{enumerate}
then $H_{P}:=H|_{P\mathrm{Dom}(H)}$ is self-adjoint on $P\mathscr{H}$.
\end{lemma}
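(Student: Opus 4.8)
The plan is to realize $H_{P}$ as a densely defined symmetric operator on the Hilbert space $P\mathscr{H}$ and then to establish self-adjointness via the basic range criterion, namely that $\mathrm{Ran}(H_{P}\pm i)=P\mathscr{H}$. First I would unpack the concrete content of hypothesis (2). The inclusion $PH\subset HP$ says precisely that whenever $x\in\mathrm{Dom}(H)$ one has $Px\in\mathrm{Dom}(H)$ and $PHx=HPx$; in particular $P$ leaves $\mathrm{Dom}(H)$ invariant. From this, $H_{P}$ is well defined as a map $P\mathrm{Dom}(H)\to P\mathscr{H}$: for $x=Pz$ with $z\in\mathrm{Dom}(H)$ we have $x\in\mathrm{Dom}(H)$ and $Hx=HPz=PHz\in P\mathscr{H}$. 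It is densely defined in $P\mathscr{H}$ by hypothesis (1), and it is symmetric because $H$ is, since $\langle H_{P}x,y\rangle=\langle Hx,y\rangle=\langle x,Hy\rangle=\langle x,H_{P}y\rangle$ for all $x,y\in P\mathrm{Dom}(H)$.

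The decisive step is to show that $P$ commutes with the resolvents $(H\pm i)^{-1}$, which exist and are bounded everywhere on $\mathscr{H}$ because $H$ is self-adjoint. Fix $y\in\mathscr{H}$ and set $x=(H+i)^{-1}y\in\mathrm{Dom}(H)$, so that $(H+i)x=y$. Applying $P$ and using (2) gives $(H+i)Px=Py$ with $Px\in\mathrm{Dom}(H)$, whence $Px=(H+i)^{-1}Py$; that is, $P(H+i)^{-1}=(H+i)^{-1}P$, and the identical argument applies to $(H-i)^{-1}$.

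With this commutation in hand the range criterion is immediate. Given any $w\in P\mathscr{H}$, put $x=(H+i)^{-1}w\in\mathrm{Dom}(H)$. Then $Px=(H+i)^{-1}Pw=(H+i)^{-1}w=x$, so $x\in P\mathrm{Dom}(H)$ and $(H_{P}+i)x=w$; hence $\mathrm{Ran}(H_{P}+i)=P\mathscr{H}$, and likewise $\mathrm{Ran}(H_{P}-i)=P\mathscr{H}$. By the standard criterion, a symmetric operator whose ranges under $\pm i$ fill the whole space is self-adjoint, so $H_{P}$ is self-adjoint on $P\mathscr{H}$. I do not expect a serious obstacle here; the only point requiring care is the correct reading of $PH\subset HP$ as furnishing invariance of $\mathrm{Dom}(H)$ under $P$ together with pointwise commutation, which is exactly what makes the resolvent computation close and what delivers $x=Px$ in the final step.
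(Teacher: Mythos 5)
Your proof is correct and follows essentially the same route as the paper's: establish symmetry of $H_{P}$ on the dense domain $P\mathrm{Dom}(H)$ and then verify the range criterion $\mathrm{Ran}(H_{P}\pm i)=P\mathscr{H}$ using hypothesis (2); your explicit commutation of $P$ with $(H\pm i)^{-1}$ is just a slightly more detailed rendering of the paper's one-line identity $\mathrm{Ran}(H_{P}\pm i)=P\mathrm{Ran}(H\pm i)$. No gaps.
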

\begin{proof}
From the first condition and the self-adjointness of $H$, it follows $H_{P}\subset H_{P}^{\dagger}$. Next we have
$\mathrm{Ran}(H_{P}\pm i)=P\mathrm{Ran}(H\pm i)$ because of the second condition. But $\mathrm{Ran}(H\pm i)=\mathscr{H}$ by the self-adjointness criterion \cite[Thm. VIII.3]{rs1}. Using this criterion again we arrive at the assertion of the lemma.
\end{proof}

Similarly we define the fermionized versions of $H^{a}_{\mathrm{eff}}$ and $h_{N,Z}$ on  $\mathrm{Dom}(H_{\mathrm{eff}}^{a})$ and $\mathrm{Dom}(h_{N,Z})$, respectively,
\begin{align*}
 &H^{a}_{\mathrm{eff,f}}=P^{AS}P^{a}H_{N,Z}^{a}P^{a}P^{AS}=P^{a}H_{N,Z}^{a}P^{a}P^{AS}\\
 &h_{N,Z,\mathrm{f}}=P^{AS}_{\mathrm{2D}}h_{N,Z}P^{AS}_{\mathrm{2D}}=h_{N,Z}P^{AS}_{\mathrm{2D}},
\end{align*}
where $P^{AS}_{\mathrm{2D}}$ acts in the same manner as $P^{AS}$ but on the Hilbert space $L^{2}(\R^{2})^{\otimes^{N}}$. $H^{a}_{\mathrm{eff,f}}$ may be viewed as acting in $\wedge^{N}L^{2}(\R^{2})$ with domain $P^{AS}_{\mathrm{2D}}\mathrm{Dom}(H^{a}_{\mathrm{eff}})$ and the same action as $H^{a}_{\mathrm{eff}}$. An analogical statement holds true for $h_{N,Z,\mathrm{f}}$. Self-adjointness of $H^{a}_{\mathrm{eff,f}}$ and $h_{N,Z,\mathrm{f}}$ then again follows from Lemma \ref{lem:sa_reduced}. Moreover, $H^{a}_{\mathrm{eff,f}}$ and $h_{N,Z,\mathrm{f}}$ are bounded below with bounds greater than or equal to the lower bounds of $H^{a}_{\mathrm{eff}}$ and $h_{N ,Z}$, respectively. 

Going carefully through the proofs of Theorems \ref{theo:eff_2d_coul}, \ref{theo:total_eff_comp}, and \ref{theo:main} (and the related lemmas) one concludes that these theorems remain valid for the fermionized versions of operators too, if we everywhere interchange $d_{\mathrm{eff}}(\xi)$ and $d_{N,Z}(\xi)$ for $d_{\mathrm{eff,f}}(\xi)$ and $d_{N,Z,\mathrm{f}}(\xi)$, respectively. Here we introduced
$$d_{\mathrm{eff,f}}(\xi):=\mathrm{dist}(\xi,\sigma(H^{a}_{\mathrm{eff,f}})),\quad d_{N,Z,\mathrm{f}}(\xi):=\mathrm{dist}(\xi,\sigma(h_{N,Z,\mathrm{f}})).$$

\section*{Acknowledgments}
This work has been supported by the grant No. 13-11058S of the Czech Science Foundation (GA\v{C}R) and the project RC120002 of the Iniciativa Cientifica Milenio (ICM CHILE). The author wishes to express his thanks to Rafael D. Benguria for many fruitful discussions and to Pavel \v{S}\v{t}ov\'{i}\v{c}ek for stimulating suggestions during the preparation of the paper. The author also acknowledges useful remarks of Mark Ashbaugh on topics related to the Feshbach decomposition.

\end{document}